\documentclass[11pt]{llncs}

\usepackage{microtype}

\usepackage{wrapfig}
\usepackage{microtype}
\usepackage{graphicx}
\usepackage{graphics}
\usepackage{epsfig}
\usepackage{epstopdf}
\usepackage{amsmath}
\usepackage{latexsym}
\usepackage{wrapfig}
\usepackage{multirow}
\usepackage{amsfonts}
\usepackage{cite}
\usepackage{amssymb}
\usepackage{caption}
\usepackage{algorithm}
\usepackage{algorithmic}
\usepackage{enumerate}
\usepackage{diagbox}
\usepackage{subfig}
\usepackage{wrapfig}
\usepackage{color}

\usepackage[a4paper,margin=2.5cm,footskip=0.25in]{geometry}

\makeatletter
\newcommand{\rmnum}[1]{\romannumeral #1}
\newcommand{\Rmnum}[1]{\expandafter\@slowromancap\romannumeral #1@}
\makeatother

\spnewtheorem{claim}{Claim}{\bfseries}{\rmfamily}
\begin{document}

\title{Greedy Strategy Works for $k$-Center Clustering with Outliers and Coreset Construction}
\author{Hu Ding   \hspace{0.25in} Haikuo Yu\hspace{0.25in} Zixiu Wang }
\institute{
School of Computer Science and Technology\\
University of Science and Technology of China\\
 \email{\tt huding@ustc.edu.cn, yhk7786@mail.ustc.edu.cn, wzx2014@mail.ustc.edu.cn}\\
}
\maketitle

\thispagestyle{empty}


\begin{abstract}
We study the problem of $k$-center clustering with outliers in arbitrary metrics and Euclidean space. Though a number of methods have been developed in the past decades, it is still quite challenging to design quality guaranteed algorithm with low complexity for this problem. 
Our idea is inspired by the greedy method, Gonzalez's algorithm, for solving the problem of ordinary $k$-center clustering. Based on some novel observations, we show that this greedy strategy actually can handle $k$-center clustering with outliers efficiently, in terms of clustering quality and time complexity.  We further show that the greedy approach yields small coreset for the problem in doubling metrics, so as to reduce the time complexity  significantly. Our algorithms are easy to implement in practice. We test our method on both synthetic and real datasets. The experimental results suggest that our algorithms can achieve near optimal solutions and yield lower running times comparing with existing methods. 
 \end{abstract}

%
%
%
  
  \newpage

\pagestyle{plain}
\pagenumbering{arabic}
\setcounter{page}{1}

\section{Introduction}
\label{sec-intro}
\vspace{-0.1in}
{\em Clustering} is one of the most fundamental problems in data analysis~\cite{jain2010data}. Given a set of elements, the goal of clustering is to partition the set into several groups based on their similarities or dissimilarities.
Several clustering models have been extensively studied, such as $k$-center, $k$-median, and $k$-means clustering~\cite{awasthi2014center}.
%
%
In reality, datasets often are noisy and contain outliers. Moreover, outliers could seriously affect the final results in data analysis~\cite{tan2006introduction,chandola2009anomaly}. Clustering with outliers can be viewed as a generalization of ordinary clustering problems; however, the existence of outliers makes the problems to be much more challenging. 


We focus on the problem of {\em $k$-center clustering with outliers} in this paper. Given a metric space with $n$ vertices and a pre-specified number of outliers $z<n$, the problem is to find $k$ balls to cover at least $n-z$ vertices and minimize the maximum radius of the balls. The problem also can be defined in Euclidean space so that the cluster centers can be any points in the space (i.e., not restricted to be selected from the input points). The $2$-approximation algorithms for ordinary $k$-center clustering (without outliers) were given in \cite{gonzalez1985clustering,hochbaum1985best}, and it was proved that any approximation ratio lower than $2$ implies $P=NP$. 
A $3$-approximation algorithm for $k$-center clustering with outliers in arbitrary metrics was proposed by Charikar et al.~\cite{charikar2001algorithms}; for the problem in Euclidean space, their approximation ratio becomes $4$. A following streaming $(4+\epsilon)$-approximation algorithm was proposed by McCutchen and Khuller~\cite{mccutchen2008streaming}. 
Recently, Chakrabarty et al.~\cite{DBLP:conf/icalp/ChakrabartyGK16} proposed a $2$-approximation algorithm for metric $k$-center clustering with outliers (but it is unclear of the resulting approximation ratio for the problem in Euclidean space). 
Existing algorithms often have high time complexities. For example, the complexities of the algorithms in~\cite{charikar2001algorithms,mccutchen2008streaming} are $O(k n^2\log n)$ and $O\big(\frac{1}{\epsilon}(kzn+(kz)^2\log \Phi)\big)$ respectively, where $\Phi$ is the ratio of the optimal radius to the smallest pairwise distance among the vertices; the algorithm in~\cite{DBLP:conf/icalp/ChakrabartyGK16} needs to solve a complicated model of linear programming and the exact time complexity is not provided. 
The coreset based idea of Badoiu et al.~\cite{BHI} needs to enumerate a large number of possible cases and also yields a high complexity. Several distributed algorithms for $k$-center clustering with outliers were proposed recently~\cite{malkomes2015fast,guha2017distributed,DBLP:journals/corr/abs-1802-09205,li2018distributed}; most of these distributed algorithms, to our best knowledge, rely on the sequential algorithm~\cite{charikar2001algorithms}.


%

In this paper, we aim to design quality guaranteed algorithm with low complexity for the problem of $k$-center clustering with outliers. 
Our idea is inspired by the greedy method from Gonzalez~\cite{gonzalez1985clustering} for solving ordinary $k$-center clustering. Based on some novel insights, we show that this greedy method also works for the problem with outliers (Section~\ref{sec-center}). Our approach can achieve the approximation ratio $2$ with respect to the clustering cost (i.e., the radius); moreover, the time complexity is linear in the input size. 
Charikar et al.\cite{charikar2003better} showed that if more than $z$ outliers are allowed to remove, the random sampling technique can be applied to reduce the data size for metric $k$-center clustering with outliers. Recently, Huang et al.\cite{huang2018epsilon} showed a similar result for instances in Euclidean space (and they name the sample as ``robust coreset''). In Section~\ref{sec-core1}, we prove that the sample size of \cite{huang2018epsilon} can be further reduced.

%

We also consider the problem in doubling metrics, motivated by the fact that many real-world datasets often manifest low intrinsic dimensions~\cite{belkin2003problems}. 
For example, image sets usually can be represented in low dimensional manifold though the Euclidean dimension of the image vectors can be very high. ``Doubling dimension'' is widely used for measuring the intrinsic dimensions of datasets~\cite{talwar2004bypassing} (the formal definition is given in Section~\ref{sec-pre}). Rather than assuming the whole $(X,d)$ has a low doubling dimension,  we only assume that \textbf{the inliers of the given data have a low doubling dimension $\rho>0$.} We do not have any assumption on the outliers; namely, the outliers can scatter arbitrarily in the space. We believe that this assumption captures a large range of high dimensional instances in reality. 

With the assumption, we show that our approach can further improve the clustering quality. In particular, the greedy approach is able to construct a coreset for the problem of $k$-center clustering with outliers; as a consequence, the time complexity can be significantly reduced if running existing algorithms on the coreset (Section~\ref{sec-doubling}). {\em coreset} construction is a technique for reducing data size so as to speedup many optimization problems; we refer the reader to the surveys~\cite{DBLP:journals/corr/Phillips16,bachem2017practical} for more details. The size of our coreset is $2z+O\big((2/\mu)^\rho k\big)$, where $\mu$ is a small parameter measuring the quality of the coreset; the construction time is $O((\frac{2}{\mu})^\rho kn)$. Note that $z$ and $k$ are often much smaller than $n$ in practice; the coefficient $2$ of $z$ actually can be further reduced to be arbitrarily close to $1$, by increasing the coefficient of the second term $(2/\mu)^\rho k$. Moreover, our coreset is a natural ``composable coreset''~\cite{DBLP:conf/pods/IndykMMM14} which could be potentially applied to distributed clustering with outliers. Very recently, Ceccarello et al.\cite{DBLP:journals/corr/abs-1802-09205} also provided a coreset for $k$-center clustering with $z$ outliers in doubling metrics, where their size is $T=O((k+z)(\frac{24}{\mu})^\rho)$ with $O(nT )$ construction time. Thus our result is a significant improvement in terms of coreset size and construction time. 
Huang et al.\cite{huang2018epsilon} considered the coreset construction for $k$-median/means clustering with outliers in doubling metrics, however, their method cannot be extended to the case of $k$-center. Aghamolaei and  Ghodsi \cite{DBLP:conf/cccg/AghamolaeiG18} considered the coreset construction for ordinary $k$-center clustering without outliers.

Our proposed algorithms are easy to implement in practice. To study the performance of our algorithms, we test them on both synthetic and real datasets in Section~\ref{sec-exp}. The experimental results suggest that our method outperforms existing methods in terms of clustering quality and running time. Also, the running time can be significantly reduced via building coreset where the clustering quality can be well preserved simultaneously.

\vspace{-0.1in}

\subsection{Preliminaries}
\label{sec-pre}
\vspace{-0.05in}

We consider the problem of $k$-center with outliers in arbitrary metrics and Euclidean space $\mathbb{R}^D$. Let $(X, d)$ be a metric, where $X$ contains $n$ vertices and $d(\cdot, \cdot)$ is the distance function; with a slight abuse of notation, we also use the function $d$ to denote the shortest distance between two subsets $X_1, X_2\subseteq X$, i.e., $d(X_1, X_2)=\min_{p\in X_1, q\in X_2}d(p, q)$. We assume that the distance between any pair of vertices in $X$ is given in advance; for the problem in Euclidean space, it takes $O(D)$ time to compute the distance between any pair of points. 
Below, we introduce several important definitions that are used throughout the paper.

\vspace{-0.05in}

\begin{definition}[$k$-Center Clustering with Outliers]
\label{def-outlier}
Given a metric $(X, d)$ with two positive integers $k$ and $z<n$, $k$-center clustering with outliers is to find a subset $X'\subseteq X$, where $|X'|\geq n-z$, and $k$ centers $\{c_1, \cdots, c_k\}\subseteq X$, such that $\max_{p\in X'}\min_{1\leq j\leq k}d(p, c_j)$ is minimized. If given a set $P$ of  $n$ points in $\mathbb{R}^D$, the problem is to find a subset $P'\subseteq P$, where $|P'|\geq n-z$, and $k$ centers $\{c_1, \cdots, c_k\}\subseteq\mathbb{R}^D$, such that $\max_{p\in P'}\min_{1\leq j\leq k}||p-c_j||$ is minimized.
%
\end{definition}

\textbf{Note.} For the sake of convenience, we describe the following definitions only in terms of metric space. In fact, the definitions can be easily modified for the problem in Euclidean space.

In this paper, we always use $X_{opt}$, a subset of $X$ with size $n-z$, to denote the subset yielding the optimal solution. Also, let $\{C_1, \cdots, C_k\}$ be the $k$ clusters forming $X_{opt}$, and the resulting clustering cost be $r_{opt}$; that is, each $C_j$ is covered by an individual ball with radius $r_{opt}$.

Usually, optimization problems with outliers are challenging to solve. Thus we often relax our goal and allow to miss a little more outliers in practice. Actually the same relaxation idea has been adopted by a number of works on clustering with outliers before~\cite{charikar2003better,huang2018epsilon,alon2003testing,li2018distributed}.

\vspace{-0.05in}
\begin{definition}[$(k,z)_{\epsilon}$-Center Clustering]
\label{def-relax}
Let $(X,d)$ be an instance of $k$-center clustering with $z$ outliers, and $\epsilon\geq 0$. $(k,z)_{\epsilon}$-center clustering is to find a subset $X'$ of $X$, where $|X'|\geq n-(1+\epsilon)z$, such that the corresponding clustering cost of Definition~\ref{def-outlier} on $X'$ is minimized.

\textbf{(\rmnum{1})} Given a set $A$ of cluster centers ($|A|$ could be larger than $k$), the resulting clustering cost, 
\begin{eqnarray}
 \min\big\{\max_{p\in X'}\min_{c\in A}d(p, c)\mid X'\subseteq X, |X'|\geq n-(1+\epsilon)z\big\}
\end{eqnarray}
is denoted by $\phi_{\epsilon}(X, A)$.

\textbf{(\rmnum{2})} If $|A|=k$ and $\phi_{\epsilon}(X, A)\leq\alpha r_{opt}$ with $\alpha>0$\footnote{Since we remove more than $z$ outliers, it is possible to have an approximation ratio $\alpha<1$, i.e, $\phi_{\epsilon}(X, A)< r_{opt}$.}, 
 it is called an $\alpha$-approximation. Moreover, if $|A|=\beta k$ with $\beta> 1$, it is called an $(\alpha, \beta)$-approximation.

\end{definition}


Obviously, the problem in Definition~\ref{def-outlier} is a special case of $(k,z)_{\epsilon}$-center clustering with $\epsilon=0$. 
Further, Definition~\ref{def-outlier} and \ref{def-relax} can be naturally extended to \textbf{weighted case:} each vertex $p$ has a non-negative weight $w_p$ and the total weight of outliers should be equal to $z$; the distance $d(p, c_j)$ in the objective function is replaced by $w_p\cdot d(p, c_j)$. 
Then we have the following definition of coreset.
\vspace{-0.03in}
\begin{definition}[Coreset]
\label{def-coreset}
Given a small parameter $\mu\in(0,1)$ and an instance $(X,d)$ of  $k$-center clustering with $z$ outliers, a set $S\subseteq X$ is called a $\mu$-coreset of $X$, if each vertex of $S$ is assigned a non-negative weight and $\phi_{0}(S, H)\in (1\pm\mu)\phi_{0}(X, H)$ for any set $H\subseteq X$ of $k$ vertices.
\end{definition}
\vspace{-0.03in}

Given a large-scale instance $(X, d)$, we can run existing algorithm on its coreset $S$ to compute an approximate solution for $X$; if $|S|\ll n$, the resulting running time can be significantly reduced. Formally, we have the following claim (see the proof in Section~\ref{sec-proof-c1}).

\begin{claim}
\label{pro-core}
If the set $H$ yields an $\alpha$-approximation of the $\mu$-coreset $S$, it yields an $\alpha\times\frac{1+\mu}{1-\mu}$-approximation of $X$.
\end{claim}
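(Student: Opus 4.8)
The plan is to chain the two halves of the coreset inequality from Definition~\ref{def-coreset}. Write $r_{opt}$ for the optimal cost of the instance $(X,d)$ and let $H^*$ denote an optimal set of $k$ centers for $X$, so that $\phi_0(X,H^*)=r_{opt}$; similarly let $r_{opt}^S$ be the optimal $k$-center-with-$z$-outliers cost on the weighted coreset $S$. By hypothesis $H$ is an $\alpha$-approximation of $S$, which by Definition~\ref{def-relax} means $|H|=k$ and $\phi_0(S,H)\leq\alpha\, r_{opt}^S$. The target bound $\phi_0(X,H)\leq\alpha\frac{1+\mu}{1-\mu}r_{opt}$ will follow once I relate (a) the optimum $r_{opt}^S$ on the coreset to $r_{opt}$, and (b) the cost $\phi_0(X,H)$ of the candidate back to its cost $\phi_0(S,H)$ on the coreset.

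First I would bound the optimum on the coreset. Since $H^*$ is in particular a valid set of $k$ vertices, the upper-bound half of the coreset guarantee applied to $H^*$ gives $\phi_0(S,H^*)\leq(1+\mu)\phi_0(X,H^*)=(1+\mu)r_{opt}$. Because $r_{opt}^S=\min_H\phi_0(S,H)\leq\phi_0(S,H^*)$, this yields $r_{opt}^S\leq(1+\mu)r_{opt}$.

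Next I would transfer the cost of $H$ from $S$ back to $X$. Applying the lower-bound half of the coreset guarantee to the set $H$ itself gives $(1-\mu)\phi_0(X,H)\leq\phi_0(S,H)$, hence $\phi_0(X,H)\leq\frac{1}{1-\mu}\phi_0(S,H)$. Combining this with the $\alpha$-approximation hypothesis and the bound on $r_{opt}^S$ obtained above,
\begin{equation*}
\phi_0(X,H)\leq\frac{1}{1-\mu}\phi_0(S,H)\leq\frac{\alpha}{1-\mu}r_{opt}^S\leq\frac{\alpha(1+\mu)}{1-\mu}r_{opt},
\end{equation*}
which is exactly the claimed $\alpha\times\frac{1+\mu}{1-\mu}$-approximation of $X$.

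The argument is a short two-step chaining, so there is no serious technical obstacle; the only point requiring care is to apply the coreset inequality to the correct center set in each step---the upper bound to the optimal set $H^*$ of $X$ (producing the $1+\mu$ factor), and the lower bound to the candidate set $H$ (producing the $\frac{1}{1-\mu}$ factor)---and to remember that an $\alpha$-approximation on $S$ is measured against the optimum $r_{opt}^S$ of $S$ rather than against $r_{opt}$. One should also note that $H^*$ is a legitimate argument for the coreset property, since by Definition~\ref{def-coreset} that property holds for every set of $k$ vertices of $X$, and that all costs on $S$ are understood in the weighted sense.
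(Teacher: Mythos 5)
Your proof is correct and follows essentially the same two-step chaining as the paper: apply the upper half of the coreset guarantee to the optimal centers of $X$ and the lower half to the candidate $H$, then combine with the approximation hypothesis on $S$. The only cosmetic difference is that you route the argument through the coreset's own optimum $r_{opt}^S$ (using $r_{opt}^S\leq\phi_0(S,H^*)$), while the paper writes the hypothesis directly as $\phi_0(S,H)\leq\alpha\,\phi_0(S,H_{opt})$; the two are equivalent and yield the identical bound.
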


As mentioned before, we also consider the case with low doubling dimension. Roughly speaking, doubling dimension describes the expansion rate of the metric. 
For any $p\in X$ and $r\geq 0$, we use $Ball(p, r)$ to denote the ball centered at $p$ with radius $r$.
\vspace{-0.03in}

\begin{definition}[Doubling Dimension]
\label{def-dd}
The doubling dimension of a metric $(X,d)$ is the smallest number $\rho>0$, such that for any $p\in X$ and $r\geq 0$, $X\cap Ball(p, 2r)$ is always covered by the union of at most $2^\rho$ balls with radius $r$.
\end{definition}
\vspace{-0.05in}
\section{Algorithms for $(k,z)_{\epsilon}$-Center Clustering}
\label{sec-center}
\vspace{-0.05in}

For the sake of completeness, let us briefly introduce the algorithm of \cite{gonzalez1985clustering} for ordinary $k$-center clustering first. Initially, it arbitrarily selects a vertex from $X$, and iteratively selects the following $k-1$ vertices, where each $j$-th step ($2\leq j\leq k$) chooses the vertex having the largest minimum distance to the already selected $j-1$ vertices; finally, each input vertex is assigned to its nearest neighbor of these selected $k$ vertices. It can be proved that this greedy strategy results in a $2$-approximation of $k$-center clustering; the algorithm also works for the problem in Euclidean space and results in the same approximation ratio. In this section, we show that a modified version of Gonzalez's algorithm yields approximate solutions for $(k,z)_{\epsilon}$-center clustering. 

In Section~\ref{sec-center-bi} and \ref{sec-center-single}, we present our results for metric $k$-center with outliers. Actually, it is easy to see that Algorithm \ref{alg-bi} and \ref{alg-single} yield the same approximation ratios if the input instance is a set of points in Euclidean space (the analysis is almost identical, and we omit the details due to the space limit); only the running times are different, since  it takes $O(D)$ time to compute distance between two points in $\mathbb{R}^D$.

%

\vspace{-0.03in}
\subsection{$(2, O(\frac{1}{\epsilon}))$-Approximation}
\label{sec-center-bi}
\vspace{-0.03in}

Here, we consider bi-criteria approximation that returns more than $k$ cluster centers. The main challenge for implementing Gonzalez's algorithm is that the outliers and inliers are mixed in $X$; for example, the selected vertex, which has the largest minimum distance to the already selected vertices, is very likely to be an outlier, and therefore the clustering quality could be arbitrarily bad. Instead, our strategy is to take a small sample from the farthest subset. We implement our idea in Algorithm~\ref{alg-bi}. For simplicity, let $\gamma$ denote $z/n$ in the algorithm; usually we can assume that $\gamma$ is a value much smaller than $1$. We prove the correctness of Algorithm~\ref{alg-bi} below.


\begin{algorithm}[tb]
   \caption{Bi-criteria Approximation Algorithm}
   \label{alg-bi}
\begin{algorithmic}
  \STATE {\bfseries Input:} An instance $(X, d)$ of metric $k$-center clustering with $z$ outliers, and $|X|=n$; parameters $\epsilon>0$, $\eta\in (0,1)$, and $t\in\mathbb{Z}^+$.
   \STATE
\begin{enumerate}
\item Let $\gamma=z/n$ and initialize a set $E=\emptyset$. 

\item Initially, $j=1$; randomly select $\frac{1}{1-\gamma}\log\frac{1}{\eta}$ vertices from $X$ and add them to $E$.
\item Run the following steps until $j= t$:
\begin{enumerate}
\item $j=j+1$ and let $Q_j$ be the farthest $(1+\epsilon)z$ vertices of $X$ to $E$ (for each vertex $p\in X$, its distance to $E$ is $\min_{q\in E}d(p, q)$). 
\item Randomly select $\frac{1+\epsilon}{\epsilon}\log\frac{1}{\eta}$ vertices from $Q_j$ and add them to $E$.
\end{enumerate}
%
%
%
%
\end{enumerate}
  \STATE {\bfseries Output} $E$.
\end{algorithmic}
\end{algorithm}




\vspace{-0.05in}
\begin{lemma}
\label{lem-select1}
With probability at least $1-\eta$, the set $E$ in Step 2 of Algorithm~\ref{alg-bi} contains at least one point from $X_{opt}$.
\end{lemma}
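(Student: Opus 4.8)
The plan is to observe that, after Step 2, the set $E$ fails to contain any point of $X_{opt}$ precisely when every one of the $m := \frac{1}{1-\gamma}\log\frac{1}{\eta}$ randomly chosen vertices turns out to be an outlier. Since $|X_{opt}| = n-z$ and $\gamma = z/n$, a single random draw from $X$ lands outside $X_{opt}$ with probability exactly $\gamma$; moreover this per-draw bound $\gamma$ holds whether the vertices are sampled with or without replacement (in the latter case the probability is in fact slightly smaller). First I would therefore bound the probability that all $m$ draws miss $X_{opt}$ by $\gamma^m$, so that the success probability is at least $1-\gamma^m$.

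The heart of the argument is to show that $\gamma^m \leq \eta$ for the prescribed value of $m$, which then gives the claimed $1-\eta$ bound. I would take logarithms: it suffices to verify $m\,\ln(1/\gamma) \geq \ln(1/\eta)$. Substituting $m = \frac{1}{1-\gamma}\log\frac{1}{\eta}$, this reduces to the elementary inequality $\frac{\ln(1/\gamma)}{1-\gamma} \geq 1$, i.e. $\ln(1/\gamma) \geq 1-\gamma$ (using $\gamma \in (0,1)$ so that $1-\gamma > 0$). The latter is exactly the standard fact $\ln x \leq x-1$ applied at $x = \gamma$, which rearranges to $-\ln\gamma \geq 1-\gamma$. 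Hence $\gamma^m \leq \eta$, and with probability at least $1-\eta$ the set $E$ contains at least one vertex of $X_{opt}$.

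The computation is essentially a one-line union/independence estimate, so there is no serious obstacle; the only point demanding a little care is that the constant $\frac{1}{1-\gamma}$ in the sample size is tuned precisely so that the bound $\ln(1/\gamma) \geq 1-\gamma$ closes the argument. For this reason I would state explicitly that $\log$ denotes the natural logarithm here, so that $m\,\ln(1/\gamma) \geq \frac{\ln(1/\gamma)}{1-\gamma}\,\ln(1/\eta) \geq \ln(1/\eta)$ follows cleanly.
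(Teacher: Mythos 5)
Your proof is correct and follows essentially the same route as the paper: the paper reduces the lemma to a folklore sampling claim whose proof likewise bounds the all-miss probability by $\gamma^m$ and closes the argument with the same elementary inequality $-\ln\gamma\geq 1-\gamma$ (stated there as $\log\frac{1}{1-\tau}\geq\tau$ with $\tau=1-\gamma$). If anything, your write-up is slightly more careful about the with/without replacement distinction and about $\log$ being the natural logarithm.
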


Since $|X_{opt}|/|X|= 1-\gamma$, Lemma~\ref{lem-select1} can be easily obtained by the following folklore claim (we show the proof in Section~\ref{sec-pro-sample}). 
\vspace{-0.05in}
\begin{claim}
\label{pro-sample}
Let $U$ be a set of elements and $V\subseteq U$ with $\frac{|V|}{|U|}=\tau>0$. Given $\eta\in(0,1)$, if one randomly samples $\frac{1}{\tau}\log\frac{1}{\eta}$ elements from $U$, with probability at least $1-\eta$, the sample contains at least one element from $V$.
\end{claim}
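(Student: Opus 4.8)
The plan is to analyze the complementary ``bad'' event that the random sample misses $V$ entirely, and to show that this event has probability at most $\eta$. Set $m=\frac{1}{\tau}\log\frac{1}{\eta}$, where $\log$ denotes the natural logarithm, and imagine drawing the $m$ sample elements one at a time. I would first dispose of the with-replacement case, which is the cleanest: each individual draw lands in $V$ with probability exactly $\tau=|V|/|U|$, independently of the others, so a single draw avoids $V$ with probability $1-\tau$, and all $m$ draws avoid $V$ with probability exactly $(1-\tau)^m$. The whole proof then rests on the elementary inequality $1-\tau\le e^{-\tau}$, valid for every real $\tau$, which gives
\[
(1-\tau)^m \le e^{-\tau m} = e^{-\tau\cdot\frac{1}{\tau}\log\frac{1}{\eta}} = e^{-\log\frac{1}{\eta}} = \eta .
\]
Passing to the complement shows that the sample contains at least one element of $V$ with probability at least $1-\eta$, which is exactly the claim.

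To cover the more natural reading in which ``randomly select'' means sampling without replacement from the finite set $U$, I would argue that the without-replacement miss probability is dominated by the with-replacement one. Concretely, conditioned on the event that the first $i-1$ draws all missed $V$, no element of $V$ has yet been removed from the pool, so the remaining fraction of $V$-elements is at least $\tau$; hence the conditional probability that the $i$-th draw also avoids $V$ is at most $1-\tau$. Multiplying these per-step conditional bounds yields the same estimate $(1-\tau)^m\le\eta$ for the bad event. A final minor technical point is that $m$ need not be an integer: since the failure probability $(1-\tau)^m$ is monotonically decreasing in $m$, rounding the sample size up to $\lceil m\rceil$ only strengthens the bound.

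There is essentially no hard step here; this is a standard coupon-collector-style hitting bound whose only real content is the single exponential inequality $1-\tau\le e^{-\tau}$. The one place warranting a little care is ensuring the estimate is legitimate for sampling without replacement, which is why I would prefer to spell out the domination argument above rather than simply invoke independence of the draws.
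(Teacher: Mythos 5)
Your proof is correct and follows essentially the same route as the paper: bound the probability that all draws miss $V$ by $(1-\tau)^m$ and use $1-\tau\le e^{-\tau}$ (equivalently, $\log\frac{1}{1-\tau}\ge\tau$, which is the form the paper uses) to conclude it is at most $\eta$. Your explicit domination argument for sampling without replacement is a welcome bit of extra care that the paper elides, but it does not change the substance of the argument.
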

\vspace{-0.05in}

Recall that $\{C_1, C_2, \cdots, C_k\}$ are the $k$ clusters forming $X_{opt}$. Denote by $\lambda_j(E)$ the number of the clusters which have non-empty intersection with $E$ at the beginning of $j$-th round in Step~3 of Algorithm~\ref{alg-bi}. For example, initially $\lambda_1(E)\geq 1$ by Lemma~\ref{lem-select1}. Obviously, if $\lambda_j(E)=k$, i.e., $C_l\cap E\neq\emptyset$ for any $1\leq l\leq k$, $E$ yields a $2$-approximation for $k$-center clustering with outliers through the triangle inequality.
\vspace{-0.02in}
\begin{claim}
\label{cla-e2}
If $\lambda_j(E)=k$,  then $\phi_0(X, E)\leq 2 r_{opt}$.
\end{claim}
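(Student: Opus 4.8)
The plan is to prove this directly via the triangle inequality, exploiting the hypothesis that $E$ already meets every optimal cluster. Recall that $X_{opt}$ is the optimal inlier set of size $n-z$, that $\{C_1,\dots,C_k\}$ partition $X_{opt}$, and that each $C_l$ is covered by a ball of radius $r_{opt}$; let $c_l$ denote the optimal center of $C_l$, so $d(p,c_l)\le r_{opt}$ for every $p\in C_l$. The assumption $\lambda_j(E)=k$ means precisely that $C_l\cap E\neq\emptyset$ for every $1\le l\le k$, so I may fix a representative vertex $e_l\in C_l\cap E$ in each cluster.

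First I would bound the distance from an arbitrary inlier to $E$. Take any $p\in X_{opt}$; then $p$ lies in some cluster $C_l$, and both $p$ and $e_l$ belong to the ball $Ball(c_l,r_{opt})$. Hence by the triangle inequality
\begin{equation}
d(p,e_l)\le d(p,c_l)+d(c_l,e_l)\le r_{opt}+r_{opt}=2r_{opt}.
\end{equation}
Since $e_l\in E$, this yields $\min_{e\in E}d(p,e)\le d(p,e_l)\le 2r_{opt}$ for every $p\in X_{opt}$.

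Finally I would unwind the definition of $\phi_0$. By Definition~\ref{def-relax}(\rmnum{1}) with $\epsilon=0$, $\phi_0(X,E)$ is the minimum over all subsets $X'\subseteq X$ with $|X'|\ge n-z$ of $\max_{p\in X'}\min_{e\in E}d(p,e)$. Choosing the witness set $X'=X_{opt}$, which has size exactly $n-z$ and therefore satisfies the cardinality constraint, the previous paragraph gives
\begin{equation}
\phi_0(X,E)\le \max_{p\in X_{opt}}\min_{e\in E}d(p,e)\le 2r_{opt},
\end{equation}
as claimed.

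There is no serious obstacle here; the argument is a one-step triangle inequality once the representatives $e_l$ are in hand. The only point requiring care is the correct handling of the outlier budget: one must verify that the witness subset $X'$ used in the definition of $\phi_0$ can legitimately be taken to be $X_{opt}$ (so that exactly the $z$ optimal outliers are discarded), rather than attempting to bound distances for all of $X$. With that choice the bound follows immediately.
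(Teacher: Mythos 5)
Your proof is correct and is exactly the argument the paper intends: the paper dispenses with this claim in one line ("through the triangle inequality"), and your write-up simply makes explicit the choice of representatives $e_l\in C_l\cap E$, the bound $d(p,e_l)\le 2r_{opt}$ via the optimal center, and the use of $X_{opt}$ as the witness set in the definition of $\phi_0$. No differences in approach.
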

\vspace{-0.05in}

\begin{lemma}
\label{lem-select2}
In each round of Step 3 of Algorithm~\ref{alg-bi}, with probability at least $1-\eta$, either (1)  $d(Q_j, E)\leq 2 r_{opt}$ or (2) $\lambda_j(E)\geq\lambda_{j-1}(E)+1$.
\end{lemma}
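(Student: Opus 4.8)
The plan is to fix a round $j$ of Step~3 and let $E$ denote the set of already selected vertices at the instant $Q_j$ is formed, so that $E$ meets exactly $\lambda_{j-1}(E)$ of the optimal clusters. I first dispose of the easy alternative: if $d(Q_j, E) \le 2 r_{opt}$ then conclusion (1) already holds, so I may assume $d(Q_j, E) > 2 r_{opt}$ and aim to establish (2). In this regime the goal is to show that $Q_j$ consists almost entirely of inliers lying in optimal clusters that $E$ has \emph{not} yet met, so that the random draw of $\frac{1+\epsilon}{\epsilon}\log\frac{1}{\eta}$ vertices from $Q_j$ in Step~3(b) lands in a fresh cluster with probability at least $1-\eta$; inserting such a vertex into $E$ raises the number of met clusters by one, i.e. $\lambda_j(E)\ge\lambda_{j-1}(E)+1$.

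The structural heart of the argument is that every vertex already accounted for by a met cluster is close to $E$. Suppose an optimal cluster $C_l$ has been met, so there is some $q\in C_l\cap E$, and let $c_l$ be the optimal center covering $C_l$. For any $p\in C_l$ the triangle inequality gives $d(p,E)\le d(p,q)\le d(p,c_l)+d(c_l,q)\le 2 r_{opt}$, since both $p$ and $q$ lie within $r_{opt}$ of $c_l$. Contrapositively, any vertex at distance greater than $2 r_{opt}$ from $E$ cannot belong to a met cluster. Since I am assuming $d(Q_j,E)=\min_{p\in Q_j}d(p,E)>2 r_{opt}$, \emph{every} one of the $(1+\epsilon)z$ vertices of $Q_j$ lies at distance greater than $2 r_{opt}$ from $E$, and hence $Q_j$ contains no vertex from any met cluster at all.

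It then remains to count. The only vertices of $X$ lying outside the optimal clusters are the $z$ outliers, so at most $z$ of the $(1+\epsilon)z$ vertices of $Q_j$ can fail to be inliers of unmet clusters; therefore at least $\epsilon z$ vertices of $Q_j$ are inliers belonging to clusters not yet met by $E$. Writing $V$ for this good subset and $U=Q_j$, we have $|V|/|U|\ge \epsilon z/((1+\epsilon)z)=\epsilon/(1+\epsilon)=:\tau$. Applying the folklore sampling bound (Claim~\ref{pro-sample}) with this $\tau$, a random sample of $\frac{1}{\tau}\log\frac{1}{\eta}=\frac{1+\epsilon}{\epsilon}\log\frac{1}{\eta}$ vertices of $Q_j$, which is exactly the number drawn in Step~3(b), meets $V$ with probability at least $1-\eta$. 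Any such sampled vertex lies in a previously unmet cluster, so once it is added to $E$ the count $\lambda$ strictly increases, which is precisely conclusion (2).

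The main obstacle, and the step deserving the most care, is the cluster-avoidance claim for $Q_j$: it hinges on reading $d(Q_j,E)>2 r_{opt}$ as the assertion that \emph{every} vertex of $Q_j$, not merely the closest among them, sits beyond distance $2 r_{opt}$ from $E$, which is valid precisely because $Q_j$ is the set of the $(1+\epsilon)z$ farthest vertices and $d(Q_j,E)$ is their minimum distance to $E$. I also need to keep the $\lambda$-bookkeeping consistent, recording the increment produced by the round-$j$ sample as the jump from $\lambda_{j-1}(E)$ to $\lambda_j(E)$; once the met-cluster avoidance and the resulting $\epsilon z$ lower bound on good vertices are in place, the remainder of the proof is a direct appeal to Claim~\ref{pro-sample}.
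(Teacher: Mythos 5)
Your proof is correct and follows essentially the same route as the paper's: assume $d(Q_j,E)>2r_{opt}$, use the triangle inequality through the optimal centers to show $Q_j$ avoids all already-met clusters (the paper phrases this as a contradiction, you phrase it contrapositively), then count that at least $\epsilon z$ of the $(1+\epsilon)z$ vertices of $Q_j$ are inliers of unmet clusters and invoke Claim~\ref{pro-sample} with $\tau=\frac{\epsilon}{1+\epsilon}$. No gaps; the bookkeeping on $d(Q_j,E)$ being a minimum over all of $Q_j$ is exactly the right point to be careful about, and you handled it correctly.
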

\begin{proof}
Suppose that (1) is not true, i.e., $d(Q_j, E)> 2 r_{opt}$, and we prove that (2) is true. Let $\mathcal{J}$ include all the indices $l\in\{1, 2, \cdots, k\}$ with $ E\cap C_l\neq\emptyset$. We claim that $Q_j\cap C_l=\emptyset$ for each $l\in \mathcal{J}$. Otherwise, let $p\in Q_j\cap C_l$ and $p'\in E\cap C_l$; due to the triangle inequality, we know that $d(p,p')\leq 2 r_{opt}$ which is in contradiction to the assumption $d(Q_j, E)> 2 r_{opt}$. Thus, $Q_j\cap X_{opt}$ only contains the vertices from $C_l$ with $l\notin \mathcal{J}$. Moreover, since the number of outliers is $z$, we know that $\frac{|Q_j\cap X_{opt}|}{|Q_j|}\geq \frac{\epsilon}{1+\epsilon}$. By Claim~\ref{pro-sample}, if randomly selecting $\frac{1+\epsilon}{\epsilon}\log\frac{1}{\eta}$ vertices from $Q_j$, with probability at least $1-\eta$, the sample contains at least one vertex from $Q_j\cap X_{opt}$; also, the vertex must come from $\cup_{l\notin \mathcal{J}}C_l$. That is, (2) $\lambda_j(E)\geq\lambda_{j-1}(E)+1$ happens.
\qed\end{proof}

If (1) of Lemma~\ref{lem-select2} happens, i.e., $d(Q_j, E)\leq 2 r_{opt}$, then it implies that $\max_{p\in X\setminus Q_j}d(p, E)\leq 2 r_{opt}$; 
moreover, since $|Q_j|=(1+\epsilon)z$, we have $\phi_{\epsilon}(X,E)\leq 2 r_{opt}$. 
Next, we assume that (1) in Lemma~\ref{lem-select2} never happens, and prove that $\lambda_j(E)=k$ with constant probability when $j=\Theta(k)$. The following idea actually has been used by Aggarwal et al.~\cite{aggarwal2009adaptive} for achieving a bi-criteria approximation for $k$-means clustering. Define a random variable $x_j$: $x_j=1$ if $\lambda_j(E)=\lambda_{j-1}(E)$, or $0$ if $\lambda_j(E)\geq\lambda_{j-1}(E)+1$, for $j=1, 2, \cdots$. So $\mathbb{E}[x_j]\leq\eta$ by Lemma~\ref{lem-select2} and
\begin{eqnarray}
\sum_{1\leq s\leq j}(1-x_s)\leq\lambda_j(E). \label{for-azuma2}
\end{eqnarray}
Also, let $J_j=\sum_{1\leq s\leq j}(x_s-\eta)$ and $J_0=0$. Then, $\{J_0, J_1, J_2, \cdots\}$ is a super-martingale with $J_{j+1}-J_j< 1$ (more details are shown in Section~\ref{sec-proof-the}). Through {\em Azuma-Hoeffding inequality}~\cite{alon2004probabilistic}, we have 
$Pr(J_t\geq J_0+\delta)\leq e^{-\frac{\delta^2}{2t}}$ 
for any $t\in\mathbb{Z}^+$ and $\delta>0$. Let $t=\frac{k+\sqrt{k}}{1-\eta}$ and $\delta=\sqrt{k}$, the inequality implies that
\begin{eqnarray}
Pr(\sum_{1\leq s\leq t}(1-x_s)\geq k)\geq 1-e^{-\frac{1-\eta}{4}}. \label{for-azuma}
\end{eqnarray}
Combining (\ref{for-azuma2}) and (\ref{for-azuma}), we know that $\lambda_t(E)=k$ with probability at least $1-e^{-\frac{1-\eta}{4}}$. Moreover, $\lambda_t(E)=k$ directly implies that $E$ is a $2$-approximate solution by Claim~\ref{cla-e2}. Together with Lemma~\ref{lem-select1}, we have the following theorem.

\begin{theorem}
\label{the-biapprox}
Let $\epsilon>0$. If we set $t=\frac{k+\sqrt{k}}{1-\eta}$ for  Algorithm~\ref{alg-bi}, with probability at least $(1-\eta)(1-e^{-\frac{1-\eta}{4}})$, $\phi_{\epsilon}(X,E)\leq 2 r_{opt}$. 
\end{theorem}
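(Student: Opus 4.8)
The plan is to assemble the three results already in hand: the initialization guarantee (Lemma~\ref{lem-select1}), the per-round progress dichotomy (Lemma~\ref{lem-select2}), and the covering consequence of hitting every optimal cluster (Claim~\ref{cla-e2}). First I would condition on the event of Lemma~\ref{lem-select1}, which holds with probability at least $1-\eta$ and guarantees $\lambda_1(E)\geq 1$. Under this event I then split into two cases according to whether alternative (1) of Lemma~\ref{lem-select2} ever occurs during the $t$ rounds of Step~3.

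In the first case, suppose $d(Q_j,E)\leq 2r_{opt}$ for some round $j$. Then every vertex outside $Q_j$ lies within distance $2r_{opt}$ of $E$, i.e. $\max_{p\in X\setminus Q_j}d(p,E)\leq 2r_{opt}$. Since $|Q_j|=(1+\epsilon)z$, we may discard exactly the set $Q_j$ as outliers, which is permissible because $(k,z)_\epsilon$-center clustering allows up to $(1+\epsilon)z$ outliers; hence $\phi_\epsilon(X,E)\leq 2r_{opt}$ immediately, and this branch is finished regardless of how many optimal clusters $E$ intersects.

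In the second case, alternative (1) never happens, so in every round alternative (2) holds with probability at least $1-\eta$ conditioned on the history. Here I would introduce the indicator $x_j$ (equal to $0$ exactly when $\lambda_j(E)$ strictly increases) and form the process $J_j=\sum_{s\le j}(x_s-\eta)$ with $J_0=0$. The key structural fact is that $\{J_j\}$ is a super-martingale with bounded increments $J_{j+1}-J_j<1$: conditioned on the filtration $\mathcal{F}_j$ of the first $j$ rounds, Lemma~\ref{lem-select2} gives $\mathbb{E}[x_{j+1}\mid\mathcal{F}_j]\leq\eta$, so $\mathbb{E}[J_{j+1}\mid\mathcal{F}_j]\leq J_j$. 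Applying Azuma--Hoeffding with the chosen $t=\frac{k+\sqrt{k}}{1-\eta}$ and deviation $\delta=\sqrt{k}$ yields $Pr(J_t\geq\sqrt{k})\leq e^{-\delta^2/(2t)}$, and the arithmetic simplifies the exponent to $-\frac{1-\eta}{2}\cdot\frac{\sqrt{k}}{\sqrt{k}+1}$, which is at most $-\frac{1-\eta}{4}$ since $\frac{\sqrt{k}}{\sqrt{k}+1}\geq\frac12$ for all $k\geq1$. On the complementary event $J_t<\sqrt{k}$, rearranging gives $\sum_{s\le t}(1-x_s)>t(1-\eta)-\sqrt{k}=k$; combined with inequality~(\ref{for-azuma2}) this forces $\lambda_t(E)=k$, and Claim~\ref{cla-e2} then gives $\phi_0(X,E)\leq 2r_{opt}$, hence $\phi_\epsilon(X,E)\leq 2r_{opt}$.

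Finally I would multiply the two success probabilities: the initialization succeeds with probability $1-\eta$, and conditioned on it the bound $\phi_\epsilon(X,E)\leq 2r_{opt}$ holds with probability at least $1-e^{-\frac{1-\eta}{4}}$ in either case (deterministic once case one occurs, martingale-controlled otherwise), giving the stated $(1-\eta)(1-e^{-\frac{1-\eta}{4}})$. The main obstacle I anticipate lies in the super-martingale branch: one must verify that $\mathbb{E}[x_{j+1}\mid\mathcal{F}_j]\leq\eta$ stays valid throughout, in particular handling rounds after $\lambda$ has already reached $k$ (where $x_j$ can be fixed to a constant without destroying the super-martingale property), and confirm that the exponent simplification yields the clean constant $\frac{1-\eta}{4}$ uniformly in $k$.
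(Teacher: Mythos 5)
Your proposal is correct and follows essentially the same route as the paper: condition on Lemma~\ref{lem-select1}, split on whether alternative (1) of Lemma~\ref{lem-select2} ever occurs, and otherwise run the super-martingale/Azuma--Hoeffding argument with $t=\frac{k+\sqrt{k}}{1-\eta}$ and $\delta=\sqrt{k}$, exactly as in the paper's Section~\ref{sec-proof-the}. Your closing remark about fixing $x_j$ after $\lambda$ reaches $k$ (or after case (1) fires) to keep the conditional bound $\mathbb{E}[x_{j+1}\mid\mathcal{F}_j]\leq\eta$ valid is a legitimate technical point that the paper glosses over, and your resolution of it is the standard and correct one.
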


\textbf{Quality and Running time.} If $\frac{1}{\eta}$ and $\frac{1}{1-\gamma}$ are constant numbers, Theorem~\ref{the-biapprox} implies that  $E$ is a $\big(2, O(\frac{1}{\epsilon})\big)$-approximation for  $(k,z)_{\epsilon}$-center clustering of $X$ with constant probability. In each round of Step~3, there are $O(\frac{1}{\epsilon})$ new vertices added to $E$, thus it takes $O(\frac{1}{\epsilon}n)$ time to update the distances from the vertices of $X$ to $E$; to select the set $Q_j$, we can apply the linear time selection algorithm~\cite{blum1973time}. Overall, the running time of Algorithm~\ref{alg-bi} is $O(\frac{k}{\epsilon}n)$. If the given instance is in $\mathbb{R}^D$, the running time will be $O(\frac{k}{\epsilon}n D)$.

Further, we consider the instances under some practical assumption, and provide new analysis of Algorithm \ref{alg-bi}. 
In reality, the clusters are usually not too small, compared with the number of outliers. For example, it is rare to have a cluster $C_l$ that $|C_l|\ll z$. 


\begin{theorem}
\label{the-biapprox2}
If each optimal cluster $C_l$ has size at least $\epsilon z$ for $1\leq l\leq k$, the set $E$ of Algorithm~\ref{alg-bi} is a $\big(4, O(\frac{1}{\epsilon})\big)$-approximation for the problem of $(k,z)_{0}$-center clustering with constant probability. 
\end{theorem}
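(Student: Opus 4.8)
The plan is to leverage the guarantee already established in Theorem~\ref{the-biapprox} and then ``upgrade'' it from an $\epsilon$-relaxed covering to an exact $(k,z)_0$ covering, paying a factor of two in the radius. First I would invoke Theorem~\ref{the-biapprox}: with constant probability the output satisfies $\phi_\epsilon(X,E)\le 2r_{opt}$, which means that the ``far set'' $F=\{p\in X\mid d(p,E)>2r_{opt}\}$ has size at most $(1+\epsilon)z$. The goal then becomes to cover all but at most $z$ points within radius $4r_{opt}$, so it suffices to control the points lying at distance more than $4r_{opt}$ from the centers.

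The key structural step is a triangle-inequality dichotomy on the optimal clusters. For an inlier $p\in C_l$, if the cluster $C_l$ contains even a single point $q$ with $d(q,E)\le 2r_{opt}$, then routing through the optimal center $c_l$ gives $d(p,E)\le d(p,c_l)+d(c_l,q)+d(q,E)\le r_{opt}+r_{opt}+2r_{opt}=4r_{opt}$. Hence every inlier that is still farther than $4r_{opt}$ from $E$ must lie in a \emph{fully-far} cluster, i.e.\ a cluster $C_l$ with $C_l\subseteq F$. This is exactly where the large-cluster assumption enters: since distinct optimal clusters are disjoint, each fully-far cluster contributes at least $\epsilon z$ distinct points to $F$, while $|F|\le(1+\epsilon)z$; therefore the number of fully-far clusters is at most $(1+\epsilon)z/(\epsilon z)=1+1/\epsilon=O(1/\epsilon)$.

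Given this bound, I would augment the center set by adding one representative point from each fully-far cluster, forming $E'$ with $|E'|\le|E|+O(1/\epsilon)=O(k/\epsilon)$, still a factor $O(1/\epsilon)$ over $k$. Each added representative covers its entire cluster within radius $2r_{opt}$, so after augmentation every inlier is within $4r_{opt}$ of $E'$ (inliers of non-fully-far clusters by the dichotomy above, inliers of fully-far clusters through their new representative). Consequently the only points that can remain at distance more than $4r_{opt}$ are genuine outliers, of which there are at most $z$; this yields $\phi_0(X,E')\le 4r_{opt}$, i.e.\ the desired $(4,O(1/\epsilon))$-approximation for $(k,z)_0$-center clustering, inheriting the constant success probability of Theorem~\ref{the-biapprox}.

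The main obstacle I anticipate is the augmentation step: since $E$ is disjoint from every fully-far cluster, those representatives are genuinely new centers that must be produced without knowing the optimal clustering. The delicate point is to extract $O(1/\epsilon)$ such representatives from the far region $F$ --- which also contains up to $z$ arbitrary outliers --- so that every dense, fully-far cluster gets hit; a natural realization is a short Gonzalez-style farthest-point selection confined to $F$, where the care lies in arguing that $O(1/\epsilon)$ greedy picks suffice to place a center within $r_{opt}$ of each fully-far cluster despite the interspersed outliers. In the simpler ``good'' event where the martingale argument of Theorem~\ref{the-biapprox} already forces $\lambda_t(E)=k$, no cluster is fully-far and Claim~\ref{cla-e2} gives $\phi_0(X,E)\le 2r_{opt}$ outright, so the augmentation is only needed in the complementary case where the stopping condition $d(Q_j,E)\le 2r_{opt}$ of Lemma~\ref{lem-select2} was triggered.
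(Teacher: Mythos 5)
Your structural dichotomy is exactly the right one and matches the paper's: if an optimal cluster $C_l$ has even one point within $2r_{opt}$ of $E$, the whole cluster is within $4r_{opt}$ by the triangle inequality, so only clusters entirely contained in the far region can be problematic, and the assumption $|C_l|\geq \epsilon z$ controls those. But there is a genuine gap at precisely the point you flag as ``the main obstacle'': you never actually produce the representatives of the fully-far clusters. The theorem is about the set $E$ output by Algorithm~\ref{alg-bi} itself, whereas your argument proves a statement about an augmented set $E'$ built by an unspecified post-processing step. Worse, the realization you gesture at --- a Gonzalez-style farthest-point selection confined to $F$ --- would fail: $F$ contains up to $z$ outliers that may be arbitrarily far apart and arbitrarily far from everything else, so deterministic farthest-point picks can be absorbed entirely by outliers, and $O(1/\epsilon)$ of them do not suffice to hit every fully-far cluster. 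The correct mechanism is random, not greedy: a fully-far cluster occupies at least an $\frac{\epsilon}{1+\epsilon}$ fraction of the far set, so a uniform sample hits it with constant probability per draw.

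This is exactly how the paper closes the argument, and it shows no augmentation is needed at all. At any round $j$ with $d(Q_j,E)\leq 2r_{opt}$ and $\lambda_j(E)<k$, either some un-hit cluster $C_{l_0}$ satisfies $C_{l_0}\subseteq Q_j$, in which case $\frac{|C_{l_0}|}{|Q_j|}\geq\frac{\epsilon z}{(1+\epsilon)z}=\frac{\epsilon}{1+\epsilon}$ and Step 3(b) of Algorithm~\ref{alg-bi} (which samples $\frac{1+\epsilon}{\epsilon}\log\frac{1}{\eta}$ points from $Q_j$) hits it with probability at least $1-\eta$, so the progress/martingale analysis of Theorem~\ref{the-biapprox} continues to apply; or no un-hit cluster is fully contained in $Q_j$, in which case every cluster has a point outside $Q_j$, hence within $2r_{opt}$ of $E$, hence the whole cluster is within $4r_{opt}$ and $\phi_0(X,E)\leq 4r_{opt}$ immediately. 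In other words, the ``augmentation'' you are missing is performed automatically by the algorithm's own sampling iterations; your proof needs this observation (or an equivalent random-sampling argument on $F$) to be complete, and your count of $O(1/\epsilon)$ fully-far clusters, while correct, is not by itself enough.
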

Compared with Theorem~\ref{the-biapprox}, Theorem~\ref{the-biapprox2} shows that we can exactly exclude $z$ outliers (rather than $(1+\epsilon)z$), though the approximation ratio with respect to the radius becomes $4$. 



\begin{proof}[Proof of Theorem~\ref{the-biapprox2}]
We take a more careful analysis on the proof of Lemma~\ref{lem-select2}. If (1) never happens, eventually $\lambda_j(E)$ will reach $k$ and thus $\phi_0(X, E)\leq 2 r_{opt}$ (Claim~\ref{cla-e2}). So we focus on the case that (1) happens before $\lambda_j(E)$ reaching $k$. 
Suppose at $j$-th round, $d(Q_j, E)\leq 2 r_{opt}$ but $\lambda_j(E)< k$. 
We consider two cases \textbf{(\rmnum{1})} there exists some $l_0\notin \mathcal{J}$ such that $C_{l_0}\subseteq Q_j$ and \textbf{(\rmnum{2})} otherwise.

For \textbf{(\rmnum{1})}, we have $C_{l_0}\subseteq Q_j$ for some $l_0\notin \mathcal{J}$. Note that we assume $|C_{l_0}|\geq \epsilon z$, i.e., $\frac{|C_{l_0}|}{|Q_j|}\geq \frac{\epsilon}{1+\epsilon}$. Using the same manner in the proof of Lemma~\ref{lem-select2}, we know that (2) $\lambda_j(E)\geq\lambda_{j-1}(E)+1$ happens with probability $1-\eta$. Thus, if  \textbf{(\rmnum{1})} is always true, we can continue Step 3 and eventually $\lambda_j(E)$ will reach $k$, that is,  a $\big(2, O(\frac{1}{\epsilon})\big)$-approximation of $(k,z)_{0}$-center clustering is obtained with constant probability.

For \textbf{(\rmnum{2})}, we have $C_{l}\setminus Q_j\neq\emptyset$ for all $l\notin\mathcal{J}$. Together with the assumption $d(Q_j, E)\leq 2 r_{opt}$, we know that there exists $q_l\in C_{l}\setminus Q_j$ (for each $l\notin\mathcal{J}$) such that $d(q_l, E)\leq d(Q_j, E)\leq 2 r_{opt}$. Consequently, we have that $\forall q\in C_{l}$, 
\begin{eqnarray}
d(q, E)&\leq& ||q-q_l||+d(q_l, E)\leq 4 r_{opt} \text{ (see the left of Figure.~\ref{fig-4app})}.
\end{eqnarray}
Note that for any $l\in \mathcal{J}$, $d(E, C_l)\leq 2 r_{opt}$ by the triangle inequality. Thus, 
\begin{eqnarray}
\phi_0(X, E)&\leq& \max_{q\in \cup^k_{l=1}C_l} d(q, E)\leq 4 r_{opt}. \label{for-biapprox21}
\end{eqnarray}
 So a $\big(4, O(\frac{1}{\epsilon})\big)$-approximation of $(k,z)_{0}$-center clustering is obtained.
\qed\end{proof}


\begin{algorithm}[tb]
   \caption{$2$-Approximation Algorithm}
   \label{alg-single}
\begin{algorithmic}
  \STATE {\bfseries Input:} An instance $(X,d)$ of metric $k$-center clustering with $z$ outliers, and $|X|=n$; a parameter $\epsilon>0$.
   \STATE
\begin{enumerate}
\item Initialize a set $E=\emptyset$.

\item Let $j=1$; randomly select one vertex from $X$ and add it to $E$.
\item Run the following steps until $j= k$:
\begin{enumerate}
\item $j=j+1$ and let $Q_j$ be the farthest $(1+\epsilon)z$ vertices to $E$. 
\item Randomly select one vertex from $Q_j$ and add it to $E$.
\end{enumerate}
%
%
%
%
\end{enumerate}
  \STATE {\bfseries Output} $E$.
\end{algorithmic}
\end{algorithm}

\subsection{$2$-Approximation}
\label{sec-center-single}
If $k$ is a constant, we show that a single-criterion $2$-approximation can be achieved. Actually, we use the same strategy as Section~\ref{sec-center-bi}, but run only $k$ rounds with each round sampling only one vertex. See Algorithm~\ref{alg-single}. 


\begin{wrapfigure}{r}{0.5\textwidth}
  \vspace{-25pt}
\begin{center}
    \includegraphics[width=0.45\textwidth]{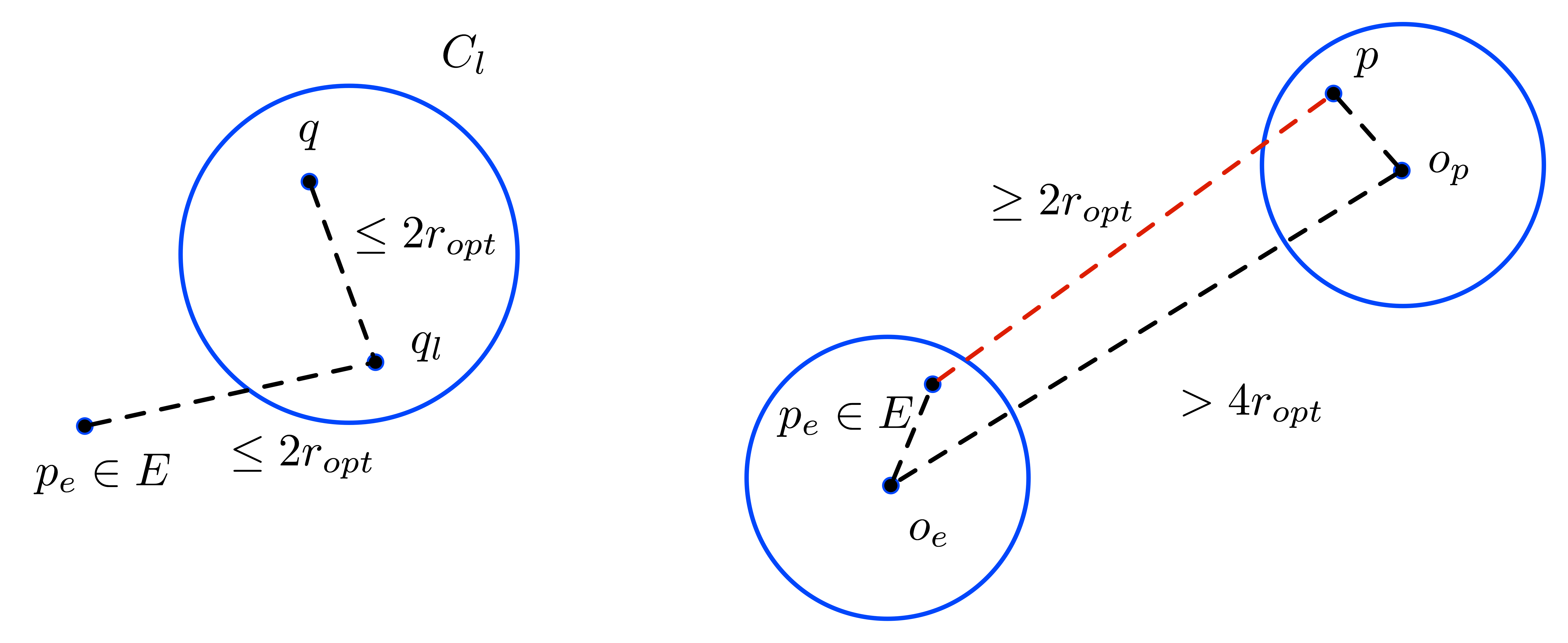}  
    \end{center}
  \vspace{-22pt}
  \caption{Left: $p_e$ is a point of $E$ having distance  $\leq 2 r_{opt}$ to $p_l$; right: $p_e$ is any point of $E$, $o_e$ and $o_p$ are the centers taking charge of $p_e$ and $p$.}   
   \label{fig-4app}
     \vspace{-15pt}
\end{wrapfigure}

Denote by $\{v_1, \cdots, v_k\}$ the $k$ sampled vertices of $E$. Actually, the proof of Theorem~\ref{the-kcenter} is similar to the analysis in Section~\ref{sec-center-bi}. The only difference is that the probability that (2) $\lambda_j(E)\geq\lambda_{j-1}(E)+1$ happens is at least $\frac{\epsilon}{1+\epsilon}$. Also note that $v_1\in X_{opt}$ with probability $1-\gamma$ ($\gamma=z/n$). If all of these events happen, either we obtain a $2$-approximation before $k$ steps (i.e., $d(E, X\setminus Q_j)\leq 2 r_{opt}$ for some $j<k$), or $\{v_1, \cdots, v_k\}$ fall into the $k$ optimal clusters $C_1, C_2, \cdots, C_k$ separately (i.e., $\lambda_k(E)=k$). No matter which case happens, we always obtain a $2$-approximation with respect to $(k,z)_{\epsilon}$-center clustering. So we have Theorem~\ref{the-kcenter}.

\begin{theorem}
\label{the-kcenter}
With probability at least $(1-\gamma)(\frac{\epsilon}{1+\epsilon})^{k-1}$, Algorithm~\ref{alg-single} returns a $2$-approximation for the problem of $(k,z)_{\epsilon}$-center clustering on $X$. The running time is $O(kn)$. If the given instance is in $\mathbb{R}^D$, the running time will be $O(kn D)$.
\end{theorem}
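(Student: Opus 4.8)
The plan is to re-run the analysis of Section~\ref{sec-center-bi} almost verbatim, replacing every boosted sample of size $\frac{1}{1-\gamma}\log\frac1\eta$ or $\frac{1+\epsilon}{\epsilon}\log\frac1\eta$ by a single uniform draw, and then to track the resulting per-round success probabilities instead of inflating them to $1-\eta$. First I would record the two elementary estimates that drive everything. Since $|X_{opt}|/|X|=(n-z)/n=1-\gamma$, the first sampled vertex $v_1$ lies in $X_{opt}$ with probability at least $1-\gamma$, and since the clusters partition $X_{opt}$ this means $\lambda_1(E)=1$. For a later round $j$, suppose condition (1) of Lemma~\ref{lem-select2}, namely $d(Q_j,E)\le 2r_{opt}$, fails; then exactly as in the proof of Lemma~\ref{lem-select2} every vertex of $Q_j\cap X_{opt}$ must belong to some optimal cluster $C_l$ with $l\notin\mathcal J$, and because $|Q_j|=(1+\epsilon)z$ while at most $z$ vertices of $Q_j$ are outliers, we get $|Q_j\cap X_{opt}|/|Q_j|\ge\frac{\epsilon}{1+\epsilon}$. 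Hence a single uniform draw from $Q_j$ hits a so-far-untouched cluster, i.e. increments $\lambda_j(E)$, with probability at least $\frac{\epsilon}{1+\epsilon}$.

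Next I would isolate the two ways a run can succeed. If at some round $j\le k$ condition (1) holds, then $d(Q_j,E)\le 2r_{opt}$ forces $\max_{p\in X\setminus Q_j}d(p,E)\le 2r_{opt}$; since $|Q_j|=(1+\epsilon)z$, discarding $Q_j$ witnesses $\phi_{\epsilon}(X,E)\le 2r_{opt}$, and adding the remaining centers only decreases this cost, so the final $E$ is a $2$-approximation for $(k,z)_{\epsilon}$-center clustering. Otherwise (1) never holds; then starting from $\lambda_1(E)=1$ a successful increment in each of the $k-1$ rounds yields $\lambda_k(E)=k$, and Claim~\ref{cla-e2} gives $\phi_0(X,E)\le 2r_{opt}$, which dominates $\phi_{\epsilon}(X,E)$ (removing $(1+\epsilon)z\ge z$ vertices can only reduce the cost), hence $\phi_{\epsilon}(X,E)\le 2r_{opt}$. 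Either way the output is a single-criterion $2$-approximation.

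The probability bound then follows by a chain-rule argument, which I expect to be the only delicate point. Conditioned on any history in which the first $j-1$ samples kept the run on the ``good path'' (so $\lambda_{j-1}(E)=j-1<k$ and (1) has not yet triggered), round $j$ either triggers (1)\,---\,a deterministic, already-successful event depending only on the current $E$\,---\,or fails to trigger (1), in which case the single draw keeps the run on the good path with conditional probability at least $\frac{\epsilon}{1+\epsilon}$ by the estimate above. Thus each of the $k-1$ rounds contributes a \emph{uniform} conditional lower bound $\frac{\epsilon}{1+\epsilon}$; multiplying these through the first-sample factor $1-\gamma$ gives success probability at least $(1-\gamma)\big(\frac{\epsilon}{1+\epsilon}\big)^{k-1}$. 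The care needed here is precisely to verify that the per-round bound holds uniformly over all admissible histories so that the conditional probabilities may legitimately be multiplied; the worst case is the path on which (1) never fires before $\lambda$ reaches $k$, which is exactly what the stated bound accounts for.

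Finally, for the running time I would observe that the algorithm performs $k$ rounds, each of which adds one vertex to $E$, updates $d(p,E)$ for all $n$ vertices in $O(n)$ total time (one comparison per vertex against the new center), and extracts the $(1+\epsilon)z$ farthest vertices $Q_j$ in $O(n)$ time via the linear-time selection algorithm~\cite{blum1973time}. This yields $O(kn)$ overall, and $O(knD)$ in $\mathbb{R}^D$ since each distance evaluation costs $O(D)$.
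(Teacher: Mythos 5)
Your proposal is correct and follows essentially the same route as the paper: it reuses the structure of Lemma~\ref{lem-select2} with a single draw per round (so the per-round increment probability becomes $\frac{\epsilon}{1+\epsilon}$ instead of $1-\eta$), handles the two success modes (condition (1) firing versus $\lambda_k(E)=k$ plus Claim~\ref{cla-e2}) identically, and multiplies the per-round bounds to get $(1-\gamma)(\frac{\epsilon}{1+\epsilon})^{k-1}$. The only difference is that you make the conditioning/chain-rule step explicit, which the paper leaves implicit; this is a welcome clarification but not a different argument.
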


To boost the probability of Theorem~\ref{the-kcenter}, we just need to repeatedly run the algorithm; the success probability is easy to calculate by taking the union bound.


\begin{corollary}
\label{the-kcenter2}
If we run Algorithm~\ref{alg-single} $ O\big(\frac{1}{1-\gamma}(\frac{1+\epsilon}{\epsilon})^{k-1}\big)$ times, with constant probability, at least one time the algorithm  returns a $2$-approximation for the problem of $(k,z)_\epsilon$-center clustering.
\end{corollary}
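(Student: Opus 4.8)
The plan is a standard probability-amplification (independent-repetition) argument built directly on the single-run guarantee of Theorem~\ref{the-kcenter}. First I would write $p := (1-\gamma)\big(\frac{\epsilon}{1+\epsilon}\big)^{k-1}$ for the lower bound on the success probability of one execution of Algorithm~\ref{alg-single}, as established in Theorem~\ref{the-kcenter}, and note that $\frac{1}{p}=\frac{1}{1-\gamma}\big(\frac{1+\epsilon}{\epsilon}\big)^{k-1}$, which is exactly the repetition count claimed in the corollary up to the hidden constant in $O(\cdot)$.

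Next I would observe that the random choices made in distinct runs of Algorithm~\ref{alg-single} (the initial vertex in Step~2 and the single sampled vertex in each round of Step~3) are mutually independent, so if we perform $N$ runs the events ``run $i$ returns a $2$-approximation'' are independent. Hence the probability that \emph{every} run fails is at most $(1-p)^N$. Applying the elementary inequality $1-p\le e^{-p}$ gives $(1-p)^N\le e^{-Np}$, and choosing $N=\lceil 1/p\rceil = O\big(\frac{1}{1-\gamma}(\frac{1+\epsilon}{\epsilon})^{k-1}\big)$ forces $Np\ge 1$. Therefore the failure probability is at most $e^{-1}$, so with probability at least $1-e^{-1}$ (a constant bounded away from $0$) at least one of the $N$ runs returns a $2$-approximation for $(k,z)_\epsilon$-center clustering, which is precisely the conclusion.

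To make the ``at least one run succeeds'' event usable by an actual algorithm, I would finally note that for each returned center set $E^{(i)}$ we can evaluate its clustering cost $\phi_\epsilon(X,E^{(i)})$ in linear time (by discarding the farthest $(1+\epsilon)z$ vertices to $E^{(i)}$) and simply output the set of minimum cost; since at least one $E^{(i)}$ has cost $\le 2r_{opt}$ with constant probability, so does the selected best set.

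I do not expect a genuine obstacle here, since this is purely a boosting step. The only points requiring care are (i) asserting the independence of the repeated runs, and (ii) the remark that the algorithm cannot know \emph{which} run succeeded and therefore must return the best of the $N$ outputs rather than a fixed one; both are addressed above, so the argument reduces to the one-line computation $e^{-Np}\le e^{-1}$.
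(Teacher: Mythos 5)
Your proposal is correct and matches the paper's (one-line) justification: the paper simply states that repeating the algorithm and bounding the overall success probability suffices, which is exactly your independent-repetition computation $(1-p)^N\le e^{-Np}\le e^{-1}$ with $N=O(1/p)$. Your added remark about returning the minimum-cost output among the $N$ runs is a sensible practical clarification but does not change the argument.
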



Similar to Theorem~\ref{the-biapprox2}, we consider the practical instances. 
We show that the quality of Theorem~\ref{the-kcenter} can be preserved even exactly excluding $z$ outliers, if the optimal clusters are ``well separated''. The property was also studied for other clustering problems in practice~\cite{DBLP:journals/pami/KanungoMNPSW02,DBLP:journals/corr/abs-1205-4891}. Let $\{o_1, \cdots, o_k\}$ be the $k$ cluster centers of the optimal clusters $\{C_1, \cdots, C_k\}$.

\begin{theorem}
\label{the-kcenter3}
Suppose that each optimal cluster $C_l$ has size at least $\epsilon z$ and $||o_l-o_{l'}||>4 r_{opt}$ for $1\leq l\neq l'\leq k$. Then with probability at least $(1-\gamma)(\frac{\epsilon}{1+\epsilon})^{k-1}$, Algorithm~\ref{alg-single} returns a $2$-approximation for the problem of $(k,z)_0$-center clustering.
\end{theorem}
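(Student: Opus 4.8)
The plan is to run the same probabilistic skeleton behind Theorem~\ref{the-kcenter}, but to strengthen its ``stopping case'' so that it, too, produces a solution that discards exactly $z$ (rather than $(1+\epsilon)z$) outliers. As before I would define the \emph{good event} to be: the first sampled vertex $v_1$ lies in $X_{opt}$ (probability $1-\gamma$), and at each subsequent round $2\le j\le k$ the sampled vertex $v_j$ hits an optimal cluster not yet touched by $E$. If I can show that at every round the conditional probability of hitting a fresh cluster is at least $\frac{\epsilon}{1+\epsilon}$, then the good event carries probability at least $(1-\gamma)(\frac{\epsilon}{1+\epsilon})^{k-1}$, under it all $k$ clusters intersect $E$, and Claim~\ref{cla-e2} immediately yields $\phi_0(X,E)\le 2r_{opt}$.

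The delicate point, and the \textbf{main obstacle}, is exactly case (1) of Lemma~\ref{lem-select2}, where $d(Q_j,E)\le 2r_{opt}$. In the proof of Lemma~\ref{lem-select2} the bound $\frac{\epsilon}{1+\epsilon}$ on hitting a new cluster relied on $d(Q_j,E)>2r_{opt}$, which forced $Q_j\cap C_l=\emptyset$ for every already-hit index $l\in\mathcal{J}$; once $d(Q_j,E)\le 2r_{opt}$ this is lost, and a sample from $Q_j\cap X_{opt}$ could fall into a cluster already represented in $E$. This is precisely the situation in which the relaxation to $(1+\epsilon)z$ outliers was previously needed, and where the two extra hypotheses must do their work.

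To remove it, I would first prove a geometric claim from the separation assumption: under the good event, for every not-yet-hit cluster $C_{l_0}$ (i.e. $l_0\notin\mathcal{J}$), all points of $C_{l_0}$ lie at distance strictly greater than $2r_{opt}$ from $E$. Indeed, if some $q\in C_{l_0}$ had $d(q,v_i)\le 2r_{opt}$ for a vertex $v_i\in E$, then $v_i$ belongs to some hit cluster $C_{l_i}$ with $l_i\ne l_0$, and the triangle inequality gives $||o_{l_0}-o_{l_i}||\le r_{opt}+2r_{opt}+r_{opt}=4r_{opt}$, contradicting $||o_{l_0}-o_{l_i}||>4r_{opt}$. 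Consequently, when $d(Q_j,E)\le 2r_{opt}$, every point at distance $>2r_{opt}$ from $E$ must lie in $Q_j$ (otherwise it would sit outside the farthest $(1+\epsilon)z$ set and hence be within $d(Q_j,E)\le 2r_{opt}$), so $\bigcup_{l_0\notin\mathcal{J}}C_{l_0}\subseteq Q_j$.

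With this containment the probability estimate is restored with no relaxation. Since each $|C_{l_0}|\ge\epsilon z$ and at least one cluster is unhit whenever $\lambda_j(E)<k$,
\begin{equation}
\frac{\big|Q_j\cap\big(\bigcup_{l_0\notin\mathcal{J}}C_{l_0}\big)\big|}{|Q_j|}=\frac{\sum_{l_0\notin\mathcal{J}}|C_{l_0}|}{(1+\epsilon)z}\ge\frac{\epsilon z}{(1+\epsilon)z}=\frac{\epsilon}{1+\epsilon},
\end{equation}
so by Claim~\ref{pro-sample} a single sample from $Q_j$ hits a fresh cluster with probability at least $\frac{\epsilon}{1+\epsilon}$, exactly as in the case $d(Q_j,E)>2r_{opt}$. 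Thus the recursion holds at every round regardless of whether case (1) occurs; the good event carries probability at least $(1-\gamma)(\frac{\epsilon}{1+\epsilon})^{k-1}$, and under it all $k$ clusters intersect $E$, whence Claim~\ref{cla-e2} gives $\phi_0(X,E)\le 2r_{opt}$. Both hypotheses are essential here: the size bound $|C_l|\ge\epsilon z$ supplies the $\frac{\epsilon}{1+\epsilon}$ sampling mass, while the separation $||o_l-o_{l'}||>4r_{opt}$ is what forces the unhit clusters entirely into $Q_j$ in the first place.
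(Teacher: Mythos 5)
Your proposal is correct and follows essentially the same route as the paper: the separation hypothesis plus the triangle inequality forces every point of an unhit cluster to lie at distance greater than $2r_{opt}$ from $E$, so when $d(Q_j,E)\le 2r_{opt}$ the unhit clusters are wholly contained in $Q_j$ and the size hypothesis $|C_l|\ge\epsilon z$ restores the $\frac{\epsilon}{1+\epsilon}$ hitting probability, while the case $d(Q_j,E)>2r_{opt}$ is handled as in Lemma~\ref{lem-select2}. This is exactly the content of the paper's Claim~\ref{cl-kcenter3} and the surrounding argument.
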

\begin{proof}
Initially, we know that $\lambda_1(E)=1$ with probability $1-\gamma$. Suppose that at the beginning of the $j$-th round of Algorithm~\ref{alg-single} with $2\leq j\leq k$, $E$ already has $j-1$ vertices separately falling in $j-1$ optimal clusters; also, we still let $\mathcal{J}$ be the set of the indices of these $j-1$ clusters. Then we have the following claim.


\begin{claim}
\label{cl-kcenter3}
$|Q_j\cap (\cup_{l\notin \mathcal{J}}C_l)|\geq \epsilon z$.
\end{claim}
\begin{proof}
For any $p\in  \cup_{l\notin \mathcal{J}}C_l$, we have 
\begin{eqnarray}
d(p, E)> 4 r_{opt}- r_{opt}-r_{opt}=2 r_{opt} \label{for-cl1}
\end{eqnarray}
from triangle inequality and the assumption $||o_l-o_{l'}||>4 r_{opt}$ for $1\leq l\neq l'\leq k$ (see the right of Figure.~\ref{fig-4app}). In addition, for any $p\in  \cup_{l\in \mathcal{J}}C_l$, we have 
\begin{eqnarray}
d(p, E)\leq 2 r_{opt}. \label{for-cl2}
\end{eqnarray}

We consider two cases. If $d(Q_j, E)\leq 2 r_{opt}$ at the current round, then (\ref{for-cl1}) directly implies that $\cup_{l\notin \mathcal{J}}C_l\subseteq Q_j$ (recall $Q_j$ is the set of farthest vertices to $E$); thus $|Q_j\cap (\cup_{l\notin \mathcal{J}}C_l)|=|\cup_{l\notin \mathcal{J}}C_l|\geq \epsilon z$ by the assumption that any $|C_l|\geq \epsilon z$. 
Otherwise, $d(Q_j, E)> 2 r_{opt}$. Then $Q_j\cap (\cup_{l\in \mathcal{J}}C_l)=\emptyset$ by (\ref{for-cl2}). Moreover, since there are only $z$ outliers and $|Q_j|=(1+\epsilon)z$, we know that $|Q_j\cap (\cup_{l\notin \mathcal{J}}C_l)|\geq \epsilon z$.
\qed\end{proof}

Claim~\ref{cl-kcenter3} reveals that with probability at least $\frac{\epsilon}{1+\epsilon}$, the new added vertex falls in $\cup_{l\notin \mathcal{J}}C_l$, i.e., $\lambda_j(E)=\lambda_{j-1}(E)+1$. Overall, we know that $\lambda_k(E)=k$, i.e., $E$ is a $2$-approximation of $(k, z)_0$-center clustering (by Claim~\ref{cla-e2}),  with probability at least $(1-\gamma)(\frac{\epsilon}{1+\epsilon})^{k-1}$. 
\qed\end{proof}

\subsection{Reducing Data Size via Random Sampling}
\label{sec-core1}

Given a metric $(X,d)$, Charikar et al.~\cite{charikar2003better} showed that we can use a random sample $S$ to replace $X$. Recall $\gamma=z/n$. Let $|S|=O(\frac{k}{\epsilon^2 \gamma}\ln n)$ and $E$ be an $\alpha$-approximate solution of $(k, z)_\epsilon$-center clustering on $(S,d)$, then $E$ is an $\alpha$-approximate solution of $(k, z)_{O(\epsilon)}$-center clustering on $(X,d)$ with constant probability. In $D$-dimensional Euclidean space, Huang et al.~\cite{huang2018epsilon} showed a similar result, where the sample size $|S|=\tilde{O}(\frac{1}{\epsilon^2\gamma^2}kD)$\footnote{The asymptotic notation $\tilde{O}(f)=O\big(f\cdot polylog(\frac{kD}{\epsilon\gamma})\big)$.} (to be consistent with our paper, we change the notations in their theorem). In this section, we show that the sample size of \cite{huang2018epsilon} can be further improved to be $\tilde{O}(\frac{1}{\epsilon^2\gamma}kD)$, which can be a significant improvement if $\frac{1}{\gamma}=\frac{n}{z}$ is large.

Let $P$ be a set of $n$ points in $\mathbb{R}^D$. Consider the range space $\Sigma=(P, \Pi)$ where each range $\pi\in \Pi$ is the complement of union of $k$ balls in $\mathbb{R}^D$. We know that the VC dimension of balls is $O(D)$~\cite{alon2004probabilistic}, and therefore the VC dimension of union of $k$ balls is $O(kD \log k)$~\cite{blumer1989learnability}. That is, the VC dimension of the range space $\Sigma$ is $O(kD \log k)$.
Let $\epsilon\in(0,1)$, and an ``$\epsilon$-sample'' $S$ of $P$ is defined as follows: $\forall \pi\in\Pi$, $\big|\frac{|\pi\cap P|}{|P|}-\frac{|\pi\cap S|}{|S|}\big|\leq \epsilon$; roughly speaking, $S$ is an approximation of $P$ with an additive error inside each range $\pi$. 
Given a range space with VC dimension $m$, an $\epsilon$-sample can be easily obtained via uniform sampling~\cite{alon2004probabilistic}, where the success probability is $1-\lambda$ and the sample size is $O\big(\frac{1}{\epsilon^2}(m\log\frac{m}{\epsilon}+\log\frac{1}{\lambda})\big)$ for any $0<\lambda<1$.
For our problem, we need to replace the ``$\epsilon$'' of the ``$\epsilon$-sample'' by $\epsilon\gamma$ to guarantee that the number of uncovered points is bounded by $\big(1+O(\epsilon)\big)\gamma n$ (we show the details below); the resulting sample size will be $\tilde{O}(\frac{1}{\epsilon^2\gamma^2}kD)$ that is the same as the sample size of  \cite{huang2018epsilon} (we assume that the term $\log\frac{1}{\lambda}$ is a constant for convenience). 

%

Actually, the front factor $\frac{1}{\epsilon^2\gamma^2}$ of the sample size can be further reduced to be $\frac{1}{\epsilon^2\gamma}$ by a more careful analysis. We observe that there is no need to guarantee the additive error for each range $\pi$ (as the definition of $\epsilon$-sample). Instead, only a multiplicative error for the ranges covering at least $\gamma n$ points should be sufficient. Note that when a range covers more points, the multiplicative error is weaker than the additive error and thus the sample size is reduced. For this purpose, we use {\em relative approximation}~\cite{har2011relative,li2001improved}: let $S\subseteq P$ be a subset of size $\tilde{O}(\frac{1}{\epsilon^2\gamma}kD)$ chosen uniformly at random, then with constant probability,
\begin{eqnarray}
\forall \pi\in\Pi,\ \Big|\frac{|\pi\cap P|}{|P|}-\frac{|\pi\cap S|}{|S|}\Big|\leq \epsilon\times\max\Big\{\frac{|\pi\cap P|}{|P|}, \gamma\Big\}. \label{for-relativesample}
\end{eqnarray}
We formally state our result below.

\begin{theorem}
\label{the-samplereduce}
Let $P$ be an instance for the problem of $k$-center clustering with outliers in $\mathbb{R}^{D}$ as described in Definition~\ref{def-outlier}, and $S\subseteq P$ be a subset of size $\tilde{O}(\frac{1}{\epsilon^2\gamma}kD)$ chosen uniformly at random. Suppose $\epsilon\leq 0.5$. Let $S$ be a new instance for the problem of $k$-center clustering with outliers where the number of outliers is set to be $z'=(1+\epsilon)\gamma |S|$. If $E$ is an $\alpha$-approximate solution of $(k, z')_{\epsilon}$-center clustering on $S$,  then $E$ is an $\alpha$-approximate solution of $(k, z)_{O(\epsilon)}$-center clustering on $P$, with constant probability.
\end{theorem}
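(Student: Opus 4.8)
The plan is to work entirely through the range space $\Sigma=(P,\Pi)$, where a range is the complement of a union of $k$ balls, so that ``the set of points left uncovered by a given collection of $k$ balls'' is precisely ``the set of points lying in the corresponding range $\pi$.'' Since $S$ is drawn uniformly at random of size $\tilde{O}(\frac{1}{\epsilon^2\gamma}kD)$, with constant probability it is a relative approximation, i.e. inequality (\ref{for-relativesample}) holds simultaneously for every $\pi\in\Pi$. I would condition on this single event for the rest of the argument; because it holds uniformly over all ranges, it legitimately applies later to any range, including data-dependent ones.

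First I would compare the optimal radii of the two instances. Let $\pi^*$ be the range complementary to the $k$ optimal balls of radius $r_{opt}$ for $P$ (in Euclidean space these centers lie in $\mathbb{R}^D$ and so are admissible for $S$ as well); then $|\pi^*\cap P|=z$, hence $|\pi^*\cap P|/|P|=\gamma$. Substituting $\pi^*$ into (\ref{for-relativesample}) gives $|\pi^*\cap S|/|S|\le(1+\epsilon)\gamma$, so $|\pi^*\cap S|\le(1+\epsilon)\gamma|S|=z'$. Thus these same $k$ balls cover at least $|S|-z'$ points of $S$, which shows that the optimal radius $r_{opt}^{(S)}$ of the $(k,z')$ instance on $S$ satisfies $r_{opt}^{(S)}\le r_{opt}$.

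Next I would push the solution $E$ back from $S$ to $P$. Set $r=\phi_{\epsilon}(S,E)$. Since $E$ is an $\alpha$-approximation on $S$ and by the previous step, $r\le\alpha\, r_{opt}^{(S)}\le\alpha\, r_{opt}$. Let $\hat\pi$ be the range complementary to $\{Ball(c,r):c\in E\}$; by the definition of $\phi_{\epsilon}$ on $S$ with $z'$ outliers, $|\hat\pi\cap S|\le(1+\epsilon)z'=(1+\epsilon)^2\gamma|S|$. Applying (\ref{for-relativesample}) to $\hat\pi$ and writing $a=|\hat\pi\cap P|/|P|$, $b=|\hat\pi\cap S|/|S|\le(1+\epsilon)^2\gamma$, the inequality reads $|a-b|\le\epsilon\max\{a,\gamma\}$. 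A short case analysis (either $a\le\gamma$, or $a>\gamma$ in which case $a(1-\epsilon)\le b$) yields $a\le\frac{(1+\epsilon)^2}{1-\epsilon}\gamma$. Since $\frac{(1+\epsilon)^2}{1-\epsilon}=1+\frac{\epsilon(3+\epsilon)}{1-\epsilon}\le 1+7\epsilon$ for $\epsilon\le 0.5$, we get $|\hat\pi\cap P|\le(1+O(\epsilon))\gamma n=(1+O(\epsilon))z$. Hence the balls of radius $r$ around $E$ leave at most $(1+O(\epsilon))z$ points of $P$ uncovered, so $\phi_{O(\epsilon)}(P,E)\le r\le\alpha\, r_{opt}$, which is exactly the assertion.

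The main obstacle is this last transfer step: the relative error is multiplicative on ``large'' ranges (those covering at least $\gamma n$ points), so I must check that it cannot inflate the outlier budget beyond $(1+O(\epsilon))z$. This is precisely where the $\max\{\cdot,\gamma\}$ on the right-hand side of (\ref{for-relativesample}) and the hypothesis $\epsilon\le 0.5$ enter, and where the two-sided case analysis (small versus large uncovered fraction) is needed to keep the hidden constant in $O(\epsilon)$ bounded. A secondary point I would state carefully is that the relative-approximation guarantee is one uniform-convergence event over all of $\Pi$, so it applies to the data-dependent range $\hat\pi$ determined by $E$ and the radius $r$, even though $r$ is computed from $S$; this is what makes it sound to reuse the same event in both the lower-bound ($\pi^*$) and upper-bound ($\hat\pi$) directions without any further probabilistic cost.
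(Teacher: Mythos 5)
Your proposal is correct and follows essentially the same route as the paper's proof: you first use the relative approximation to show the optimal $k$ balls for $P$ leave at most $z'$ points of $S$ uncovered (so the optimum on $S$ is at most $r_{opt}$), then transfer the uncovered-point bound for $E$'s balls from $S$ back to $P$ via the same $\frac{(1+\epsilon)^2}{1-\epsilon}\gamma$ calculation. The only cosmetic difference is that you handle the final bound by a direct case analysis on $\max\{|\hat\pi\cap P|/|P|,\gamma\}$ where the paper argues by contradiction, and you make explicit the (correct) point that the uniform-convergence event covers the data-dependent range $\hat\pi$.
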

\begin{proof}
We assume that $S$ is a relative approximation of $P$ and (\ref{for-relativesample}) holds (this happens with constant probability). Let $\mathbb{B}_{opt}$ be the set of $k$ balls covering $(1-\gamma)n$ points induced by the optimal solution for $P$, and $\mathbb{B}_{S}$ be the set of $k$ balls induced by an $\alpha$-approximate solution of $(k, z')_{\epsilon}$-center clustering on $S$. Suppose the radius of each ball in $\mathbb{B}_{opt}$ (resp., $\mathbb{B}_{S}$) is $r_{opt}$ (resp., $r_S$). We denote the complements of $\mathbb{B}_{opt}$ and $\mathbb{B}_{S}$ as $\pi_{opt}$ and $\pi_{S}$, respectively.

First, since $\mathbb{B}_{opt}$ covers $(1-\gamma)n$ points of $P$ and $S$ is a relative approximation of $P$, we have
\begin{eqnarray}
\frac{\big|\pi_{opt}\cap S\big| }{|S|}\leq  \frac{\big|\pi_{opt}\cap P\big| }{|P|}+ \epsilon\times\max\Big\{\frac{|\pi_{opt}\cap P|}{|P|}, \gamma\Big\}= (1+\epsilon)\gamma \label{for-samplereduce4}
\end{eqnarray}
by (\ref{for-relativesample}). That is, the set balls $\mathbb{B}_{opt}$ cover at least $\big(1-(1+\epsilon)\gamma\big) |S|$ points of $S$, and therefore it is a feasible solution for the instance $S$ with respect to the problem of $k$-center clustering with $z'$ outliers. 
%
%
%
Since $\mathbb{B}_{S}$ is  an $\alpha$-approximate solution of $(k, z')_{\epsilon}$-center clustering on $S$, we have
\begin{eqnarray}
r_S \leq  \alpha r_{\textrm{opt}};  \hspace{0.2in}
|\pi_{S}\cap S| \leq (1+\epsilon)z'=(1+\epsilon)^2\gamma|S|. \label{for-samplereduce2}
\end{eqnarray}

Now, we claim that
\vspace{-0.2in}
\begin{eqnarray}
\big|\pi_{S}\cap P\big|\leq \frac{(1+\epsilon)^2}{1-\epsilon}\gamma |P|. \label{for-samplereduce3}
\end{eqnarray}
Assume that (\ref{for-samplereduce3}) is not true, then (\ref{for-relativesample}) implies
$\Big|\frac{|\pi_{S}\cap P|}{|P|}-\frac{|\pi_{S}\cap S|}{|S|}\Big|\leq \epsilon\times \max\Big\{\frac{|\pi_{S}\cap P|}{|P|},\gamma\Big\}=\epsilon \frac{|\pi_{S}\cap P|}{|P|}$.
So $\frac{|\pi_{S}\cap S|}{|S|}\geq (1-\epsilon)\frac{|\pi_{S}\cap P|}{|P|}>(1+\epsilon)^2\gamma$, which is in contradiction with the second inequality of (\ref{for-samplereduce2}), and thus (\ref{for-samplereduce3}) is true. We assume $\epsilon\leq 0.5$, so $\frac{1}{1-\epsilon}\leq 1+2\epsilon$ and $\frac{(1+\epsilon)^2}{1-\epsilon}=1+O(\epsilon)$. Consequently (\ref{for-samplereduce3}) and the first inequality of (\ref{for-samplereduce2}) together imply that $\mathbb{B}_{S}$ is an $\alpha$-approximate solution of $(k, z)_{O(\epsilon)}$-center clustering on $P$.
%
\qed\end{proof}

\vspace{-0.15in}
\section{Coreset Construction in Doubling Metrics}
\label{sec-doubling}
\vspace{-0.1in}

%
%
%
In this section, we always assume the following is true by default:

\vspace{0.05in}
{\em Given an instance $(X,d)$ of $k$-center clustering with outliers, the metric $(X_{opt},d)$, i.e., the metric formed by the set of inliers, has a constant doubling dimension $\rho>0$.}
 \vspace{0.05in}
 
We do not have any restriction on the outliers $X\setminus X_{opt}$. Thus the above assumption is more relaxed and practical than assuming the whole $(X,d)$ has a constant doubling dimension. 
%
From Definition~\ref{def-dd}, we directly know that each optimal cluster $C_l$ of $X_{opt}$ can be covered by $2^\rho$ balls with radius $r_{opt}/2$ (see the left figure in Figure.~\ref{fig-dd}). Imagine that the instance $(X, d)$ has $2^\rho k$ clusters, where the optimal radius is at most $r_{opt}/2$. Therefore, we can just replace $k$ by $2^\rho k$ when running Algorithm~\ref{alg-bi}, so as to reduce the approximation ratio (i.e., the ratio of the resulting radius to $r_{opt}$) from $2$ to $1$. 

\begin{wrapfigure}{r}{0.45\textwidth}
  \vspace{-30pt}
\begin{center}
    \includegraphics[width=0.37\textwidth]{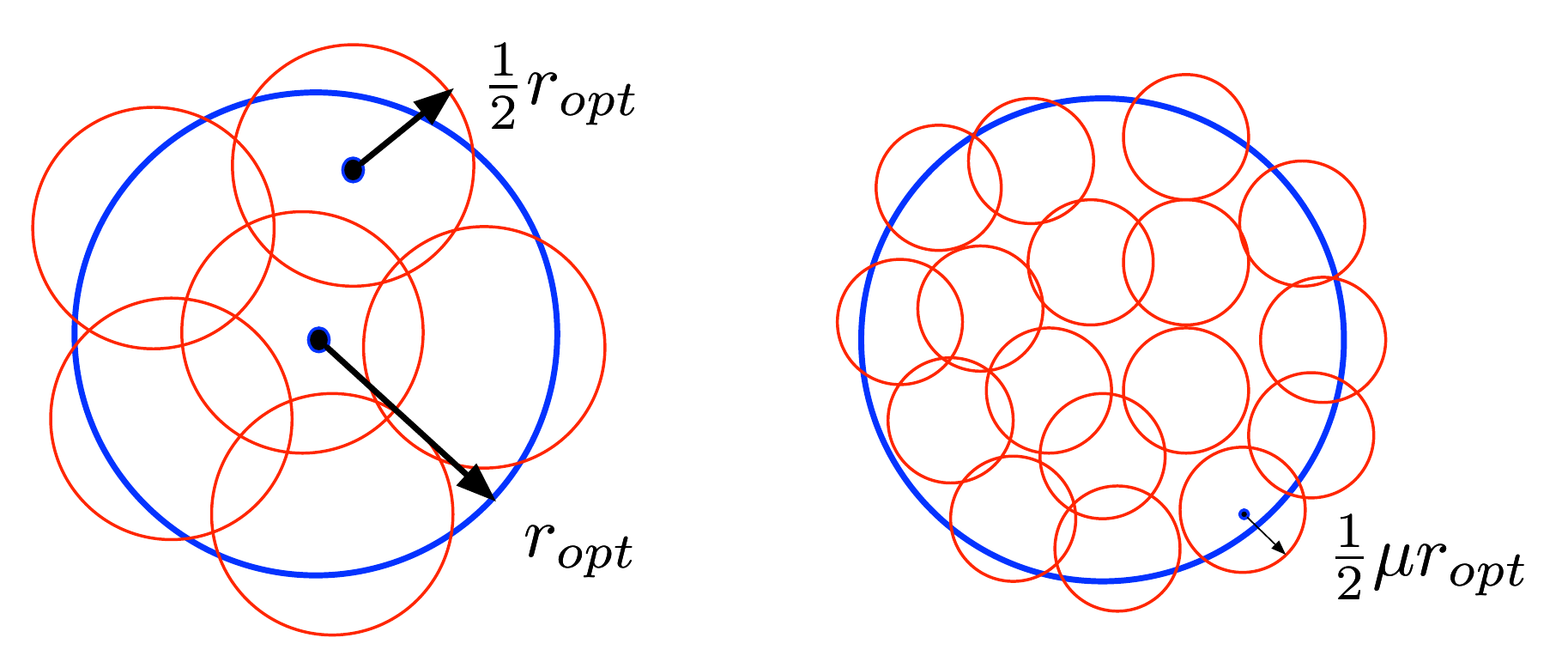}  
    \end{center}
  \vspace{-22pt}
  \caption{Illustrations for Theorem~\ref{the-double-biapprox} and \ref{the-coreset}.}   
   \label{fig-dd}
     \vspace{-18pt}
\end{wrapfigure}

\vspace{-0.1in}
\begin{theorem}
\label{the-double-biapprox}
If we set $t=\frac{2^\rho k+2^{\rho/2}\sqrt{k}}{1-\eta}$ for Algorithm~\ref{alg-bi}, with probability $(1-\eta)(1-e^{-\frac{1-\eta}{4}})$, $\phi_{\epsilon}(X,E)\leq  r_{opt}$. So the set $E$ is a $\big(1, O(\frac{2^\rho}{\epsilon})\big)$-approximation for the problem of $(k,z)_{\epsilon}$-center clustering, and the running time is $O(2^\rho\frac{k}{\epsilon}n)$.
\end{theorem}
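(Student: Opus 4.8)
The plan is to reduce the statement to Theorem~\ref{the-biapprox} by reinterpreting the instance through the doubling dimension of the inliers. First I would apply Definition~\ref{def-dd} to $(X_{opt},d)$: since each optimal cluster satisfies $C_l\subseteq Ball(o_l,r_{opt})=Ball(o_l,2\cdot r_{opt}/2)$, it is covered by at most $2^\rho$ balls of radius $r_{opt}/2$. Assigning every inlier to one such ball yields a partition of $X_{opt}$ into at most $k'\le 2^\rho k$ sub-clusters $C'_1,\dots,C'_{k'}$, each of radius $\le r_{opt}/2$ and hence diameter $\le r_{opt}$. The algorithm itself is unchanged; we only enlarge the number of rounds $t$, and all of the analysis is recast against this finer decomposition.

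Next I would replay the argument of Section~\ref{sec-center-bi} verbatim with the sub-clusters in place of the $C_l$'s. Redefine $\lambda_j(E)$ as the number of sub-clusters met by $E$ at the start of round $j$. The analogue of Claim~\ref{cla-e2} becomes: if $\lambda_j(E)=k'$, then every sub-cluster contains a point of $E$, so the triangle inequality gives $\phi_0(X,E)\le 2\cdot(r_{opt}/2)=r_{opt}$. The analogue of Lemma~\ref{lem-select2} is obtained by replacing the threshold $2r_{opt}$ with the sub-cluster diameter $r_{opt}$: assuming $d(Q_j,E)>r_{opt}$, no already-hit sub-cluster can intersect $Q_j$ (two points sharing a sub-cluster are within $r_{opt}$), so $Q_j\cap X_{opt}$ lies entirely in the not-yet-hit sub-clusters and still constitutes at least a $\frac{\epsilon}{1+\epsilon}$-fraction of $Q_j$; then sampling $\frac{1+\epsilon}{\epsilon}\log\frac{1}{\eta}$ vertices increments $\lambda_j(E)$ with probability $1-\eta$ by Claim~\ref{pro-sample}. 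In the complementary event $d(Q_j,E)\le r_{opt}$, the bound $\max_{p\in X\setminus Q_j}d(p,E)\le r_{opt}$ together with $|Q_j|=(1+\epsilon)z$ forces $\phi_\epsilon(X,E)\le r_{opt}$ outright.

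I would then feed $k'$ into the same Azuma--Hoeffding computation. Taking $t=\frac{k'+\sqrt{k'}}{1-\eta}$ and $\delta=\sqrt{k'}$ reproduces $\frac{\delta^2}{2t}\ge\frac{1-\eta}{4}$, so $\lambda_t(E)=k'$ with probability $\ge 1-e^{-\frac{1-\eta}{4}}$; combined with the initialization guarantee of Lemma~\ref{lem-select1} (probability $1-\eta$) this yields the stated success probability and $\phi_\epsilon(X,E)\le r_{opt}$. Since $k'\le 2^\rho k$, choosing $t=\frac{2^\rho k+2^{\rho/2}\sqrt{k}}{1-\eta}$ merely over-provisions the rounds, and the resulting $\delta=t(1-\eta)-k'\ge 2^{\rho/2}\sqrt{k}$ keeps $\frac{\delta^2}{2t}\ge\frac{1-\eta}{4}$ valid. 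The running time is immediate: $t=O(2^\rho k)$ rounds, each adding $O(1/\epsilon)$ centers and costing $O(n/\epsilon)$ to refresh distances, for $O(2^\rho\frac{k}{\epsilon}n)$ total.

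The main obstacle is conceptual rather than computational: I must verify that the whole probabilistic apparatus of Theorem~\ref{the-biapprox} is insensitive to swapping the true optimal $k$-clustering for a finer, smaller-radius $k'$-clustering of the \emph{same} inlier set $X_{opt}$. The two subtleties to watch are that the doubling-dimension decomposition is a cover whose cardinality is only bounded by $2^\rho k$ (handled by over-provisioning $t$ and noting that the Azuma exponent only improves when $k'<2^\rho k$), and that the threshold separating the two cases in Lemma~\ref{lem-select2} must be tightened from $2r_{opt}$ to the sub-cluster diameter $r_{opt}$ — this tightening is precisely the source of the improved approximation ratio $1$.
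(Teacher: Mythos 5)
Your proposal is correct and follows essentially the same route as the paper: cover each optimal cluster by $2^\rho$ balls of radius $r_{opt}/2$ via Definition~\ref{def-dd}, treat the result as an instance with $2^\rho k$ clusters of half the radius, and rerun the analysis of Theorem~\ref{the-biapprox} with $k$ replaced by $2^\rho k$ so that the factor-$2$ guarantee applied to $r_{opt}/2$ yields $r_{opt}$. Your write-up is in fact more explicit than the paper's sketch, correctly tightening the threshold in Lemma~\ref{lem-select2} to the sub-cluster diameter and checking that over-provisioning $t$ keeps the Azuma--Hoeffding exponent at $\frac{1-\eta}{4}$ when the cover has fewer than $2^\rho k$ nonempty pieces.
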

\vspace{-0.1in}

If considering the problem in Euclidean space $\mathbb{R}^D$ where the doubling dimension of the inliers is $\rho$, the running time becomes $O(2^\rho\frac{k}{\epsilon}nD)$. 
Inspired by Theorem~\ref{the-double-biapprox}, we can further construct coreset for $k$-center clustering with outliers (see Definition~\ref{def-coreset}). Let $\mu\in (0,1)$. If applying Definition~\ref{def-dd} recursively, we know that each $C_l$ is covered by $2^{\rho\log 2/\mu}=(\frac{2}{\mu})^\rho$ balls with radius $\frac{\mu}{2} r_{opt}$, and $X_{opt}$ is covered by $(\frac{2}{\mu})^\rho k$ such balls in total. See the right figure in Figure.~\ref{fig-dd}. Based on this observation, we have Algorithm~\ref{alg-coreset} for constructing $\mu$-coreset.

\vspace{-0.05in}
\begin{theorem}
\label{the-coreset}
With constant probability, Algorithm~\ref{alg-coreset} outputs a $\mu$-coreset $E$ of $k$-center clustering with $z$ outliers. The size of $E$ is at most $2z+O\big((\frac{2}{\mu})^\rho k\big)$, and the construction time is $O((\frac{2}{\mu})^\rho kn)$.
\end{theorem}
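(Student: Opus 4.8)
The plan is to reduce the coreset guarantee to the fine-clustering guarantee already established for the greedy procedure, and then to convert that clustering into a weighted sample. First, using Definition~\ref{def-dd} recursively, each optimal cluster $C_l$ is covered by $(2/\mu)^\rho$ balls of radius $\frac{\mu}{2}r_{opt}$, so $X_{opt}$ behaves like an instance with $(2/\mu)^\rho k$ clusters of radius $\frac{\mu}{2}r_{opt}$. Running the greedy selection of Algorithm~\ref{alg-bi} with $t=\Theta\big((2/\mu)^\rho k\big)$ rounds (the same martingale analysis as in Theorem~\ref{the-double-biapprox}, with $2^\rho$ replaced by $(2/\mu)^\rho$ and $r_{opt}$ by $\frac{\mu}{2}r_{opt}$) yields, with constant probability, a center set $E_c\subseteq X$ satisfying $\phi_\epsilon(X,E_c)\le \mu r_{opt}$. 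This splits $X$ into a near set $N$, the $n-(1+\epsilon)z$ points within distance $\mu r_{opt}$ of $E_c$, and a far set $F$ with $|F|=(1+\epsilon)z$. I would then form the coreset $E$ by snapping every near point to its nearest center in $E_c$ and weighting each center by the number of points snapped to it, while keeping every far point of $F$ individually with weight $1$.

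The single fact that makes the multiplicative error work is that for every set $H$ of $k$ vertices we have $\phi_0(X,H)\ge r_{opt}$, because $r_{opt}$ is the minimum clustering cost over all choices of $k$ centers and all $(n-z)$-subsets. Hence the snapping error, which is at most $\mu r_{opt}$ per point, is bounded by $\mu\,\phi_0(X,H)$ for every $H$. This is what converts an additive error of $\mu r_{opt}$ into the required relative error.

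It remains to verify $\phi_0(E,H)\in(1\pm\mu)\phi_0(X,H)$ for every $k$-subset $H$, which I would do in two directions. For the upper bound, take an optimal solution for $X$ of cost $r=\phi_0(X,H)$ that excludes a set $Y$ with $|Y|\le z$; in the coreset, exclude exactly the far points of $F\cap Y$ together with every snapped center whose entire group lies in $Y$. These are disjoint subsets of $Y$, so the excluded weight is at most $|Y|\le z$; every covered center $c_i$ has some group member $p\notin Y$ with $d(c_i,H)\le d(p,H)+\mu r_{opt}\le r+\mu r_{opt}$, and every covered far point has distance $\le r$, giving $\phi_0(E,H)\le r+\mu r_{opt}\le(1+\mu)\phi_0(X,H)$. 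For the lower bound, take an optimal weighted coreset solution of cost $\rho=\phi_0(E,H)$; expanding each covered center back to its whole group (each member within $\rho+\mu r_{opt}$ of $H$) and keeping the covered far points produces a feasible $X$-solution excluding at most $z$ original points, so $\phi_0(X,H)\le\rho+\mu r_{opt}\le\rho+\mu\,\phi_0(X,H)$, i.e. $\rho\ge(1-\mu)\phi_0(X,H)$.

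Finally, taking $\epsilon=1$ gives $|F|=2z$ far points plus $O\big((2/\mu)^\rho k\big)$ weighted centers, for total size $2z+O\big((2/\mu)^\rho k\big)$; the construction cost is dominated by the $t=\Theta((2/\mu)^\rho k)$ greedy rounds, each spending $O(n)$ time to update distances to $E_c$ and to extract the farthest points by linear-time selection, plus an $O(n)$ snapping pass, for $O\big((2/\mu)^\rho k\,n\big)$ overall. I expect the main obstacle to be the two-directional coreset argument rather than the greedy step: because a snapped center carries its group's weight atomically, one must argue that excluding or covering a whole group neither overshoots the outlier budget $z$ (handled by noting the fully-excluded groups and excluded far points are disjoint subsets of $Y$) nor inflates the radius beyond $r+\mu r_{opt}$ (handled by charging each covered center to a surviving group member). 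Composing the constant success probability of the greedy step with this deterministic analysis then yields the claim.
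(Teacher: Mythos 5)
Your proposal is correct and follows essentially the same route as the paper: run the greedy procedure for $\Theta\big((2/\mu)^\rho k\big)$ rounds so that all but $2z$ points lie within $\mu r_{opt}$ of the selected centers, snap the near points onto weighted centers while keeping the $2z$ far points individually, and convert the additive error $\mu r_{opt}$ into a relative one via $r_{opt}\le\phi_0(X,H)$. The only difference is cosmetic and lies in verifying the $(1\pm\mu)$ bound: the paper views each weighted center as a multiset of unit-weight copies and uses a bijection $f$ between $X$ and $E$ that moves every point by at most $\mu r_{opt}$, whereas you run an explicit two-directional exchange argument on feasible solutions and handle the outlier budget through the disjointness of fully-excluded groups and excluded far points; both arguments are valid and equivalent.
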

\textbf{Remark.} \textbf{(1)} The previous ideas based on uniform sampling~\cite{charikar2003better,huang2018epsilon} (also our idea in Section~\ref{sec-core1}) cannot get rid of the violation on the number of outliers; the sample sizes will become infinity if not allowing to remove more than $z$ outliers. Our coreset in Theorem~\ref{the-coreset} works for removing $z$ outliers exactly. Consequently, our coreset can be used for existing algorithms of $k$-center clustering with outliers, such as \cite{charikar2001algorithms}, to reduce their complexities. \textbf{(2)} Another feature is that our coreset is a natural composable coreset. If $X$ (or the point set $P$) is partitioned into $L$ parts, we can run Algorithm~\ref{alg-coreset} for each part, and obtain a coreset with size $\Big(2z+O\big((\frac{2}{\mu})^\rho k\big)\Big) L$ in total (the proof is almost identical to the proof of Theorem~\ref{the-coreset} below). So our coreset construction can potentially be applied to distributed clustering with outliers. \textbf{(3)} The coefficient $2$ of $z$ actually can be further reduced by modifying the value of $\epsilon$ in Step 2 of Algorithm~\ref{alg-coreset} (we just set $\epsilon=1$ for simplicity). In general, the size of $E$ is $(1+\epsilon)z+O\big(\frac{1}{\epsilon}(\frac{2}{\mu})^\rho k\big)$ and  the construction time is $O(\frac{1}{\epsilon}(\frac{2}{\mu})^\rho kn)$ (or $O(\frac{1}{\epsilon}(\frac{2}{\mu})^\rho knD)$ in $\mathbb{R}^D$).
%
%
\begin{algorithm}[tb]
   \caption{The Coreset Construction}
   \label{alg-coreset}
\begin{algorithmic}
  \STATE {\bfseries Input:} An instance $(X,d)$ of metric $k$-center clustering with $z$ outliers, and $|X|=n$; parameters $\eta$ and $\mu\in (0,1)$.
   \STATE
\begin{enumerate}
\item Let $l=(\frac{2}{\mu})^\rho k$.


\item Set $\epsilon=1$ and run Algorithm~\ref{alg-bi} $t=\frac{l+\sqrt{l}}{1-\eta}$ rounds.
Denote by $\tilde{r}$ the maximum distance between $E$ and $X$ by excluding the farthest $2z$ vertices, after the final round of Algorithm~\ref{alg-bi}.

\item Let 
$X_{\tilde{r}}=\{p\mid p\in X \text{ and } d(p, E)\leq \tilde{r}\}$.

\item For each vertex $p\in X_{\tilde{r}}$, assign it to its nearest neighbor in $E$; for each vertex $q\in E$, let its weight be the number of vertices assigning to it.


\item Add $X\setminus X_{\tilde{r}}$ to $E$;  each vertex of $X\setminus X_{\tilde{r}}$ has weight $1$.

\end{enumerate}
  \STATE {\bfseries Output} $E$ as the coreset.
\end{algorithmic}
\end{algorithm}

\begin{proof}[Proof of Theorem~\ref{the-coreset}]
Similar to Theorem~\ref{the-double-biapprox}, we know that  $|X_{\tilde{r}}|= n-2z$ and $\tilde{r}\leq 2\times \frac{\mu}{2} r_{opt}=\mu r_{opt}$ with constant probability in Algorithm~\ref{alg-coreset}. Thus, the size of $E$ is $|X\setminus X_{\tilde{r}}|+O\big((\frac{2}{\mu})^\rho k\big)= 2z+O\big((\frac{2}{\mu})^\rho k\big)$. Moreover, it is easy to see that the running time of Algorithm~\ref{alg-coreset} is $O((\frac{2}{\mu})^\rho kn)$.


Next, we show that $E$ is a $\mu$-coreset of $X$. For each vertex $q\in E$, denote by $w(q)$ the weight of $q$; for the sake of convenience in our proof, we view each $q$ as a set of $w(q)$ overlapping unit weight vertices. Thus, from the construction of $E$, we can see that there is a bijective mapping $f$ between $X$ and $E$, where 
\begin{eqnarray}
||p-f(p)||\leq\tilde{r}\leq\mu r_{opt}, \hspace{0.2in} \forall p\in X. \label{for-map}
\end{eqnarray}
Let $H=\{c_1, c_2, \cdots, c_k\}$ be any $k$ vertices of $X$. Suppose that $H$ induces $k$ clusters $\{A_1, A_2, \cdots, A_k\}$ (resp., $\{B_1, B_2, \cdots, B_k\}$) with respect to the problem of $k$-center clustering with $z$ outliers on $E$ (resp., $X$), where each $A_j$ (resp., $B_j$) has the cluster center $c_j$ for $1\leq j\leq k$. Let $r_E=\phi_0(E, H)$ and $r_X=\phi_0(X, H)$, respectively. Also, let $r'_E$ (resp., $r'_X$) be the smallest value $r$,  such that for any $1\leq j\leq k$, $f(B_j)\subseteq Ball(c_j, r)$ (resp., $f^{-1}(A_j)\subseteq Ball(c_j, r)$). We need the following claim.

\begin{claim}
\label{cla-core}
$|r'_E-r_X|\leq \mu r_{opt}$ and $|r'_X-r_E|\leq \mu r_{opt}$ (see the proof in Section~\ref{sec-proof-cla-core}).
\end{claim}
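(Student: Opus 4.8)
The plan is to let the bijection $f$ from (\ref{for-map}) do all the work: since $f$ displaces every vertex by at most $\mu r_{opt}$, transporting a clustering through $f$ perturbs every point-to-center distance by at most $\mu r_{opt}$, and two applications of the triangle inequality (one in each direction) turn this into the two-sided bounds in the claim. The key observation to set up first is that both quantities in each pair are maxima over the \emph{same} cluster index $j$ and the \emph{same} center $c_j$, which is exactly how $r'_E$ and $r'_X$ are defined; this is what lets a per-vertex estimate aggregate cleanly into a bound on the maxima.

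Concretely, for $|r'_E-r_X|\leq\mu r_{opt}$ I would write $r_X=\phi_0(X,H)=\max_{1\leq j\leq k}\max_{p\in B_j}d(p,c_j)$ and, by definition of $r'_E$ as the smallest radius with $f(B_j)\subseteq Ball(c_j,r)$, also $r'_E=\max_{1\leq j\leq k}\max_{p\in B_j}d(f(p),c_j)$. For any $j$ and any $p\in B_j$, (\ref{for-map}) and the triangle inequality give
\begin{eqnarray}
d(f(p),c_j)\leq d(f(p),p)+d(p,c_j)\leq \mu r_{opt}+d(p,c_j),
\end{eqnarray}
and taking the maximum over all such $j$ and $p$ yields $r'_E\leq r_X+\mu r_{opt}$. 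Reversing the roles of $p$ and $f(p)$ in the triangle inequality gives $d(p,c_j)\leq \mu r_{opt}+d(f(p),c_j)$, hence $r_X\leq r'_E+\mu r_{opt}$; combining the two inequalities proves $|r'_E-r_X|\leq\mu r_{opt}$. The bound $|r'_X-r_E|\leq\mu r_{opt}$ is obtained by the symmetric argument: now use the clusters $A_j$ induced by $H$ on $E$, write $r_E=\max_j\max_{q\in A_j}d(q,c_j)$ and $r'_X=\max_j\max_{q\in A_j}d(f^{-1}(q),c_j)$, and note that (\ref{for-map}) equivalently gives $d(q,f^{-1}(q))\leq\mu r_{opt}$ for every $q\in E$, so the identical two-directional triangle-inequality estimate applies verbatim.

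I do not expect a genuine obstacle here; the claim is the elementary perturbation engine that later propagates the $\mu r_{opt}$ displacement of $f$ into the multiplicative $(1\pm\mu)$ distortion of Definition~\ref{def-coreset}. The only points requiring care are bookkeeping ones: first, that treating each weighted vertex $q\in E$ as $w(q)$ overlapping unit-weight vertices makes $f$ a genuine bijection between $n$-element sets, so that $f(B_j)$ and $f^{-1}(A_j)$ are well defined; and second, that the clusters $A_j$ and $B_j$ may exclude \emph{different} sets of outliers, yet this is harmless because $r'_E$ and $r'_X$ are defined purely in terms of the images of $B_j$ and preimages of $A_j$ respectively, so each inequality compares distances to a common center within a common cluster and no matching of outlier sets is needed.
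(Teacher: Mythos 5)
Your proposal is correct and follows essentially the same argument as the paper: both directions come from the $\mu r_{opt}$ displacement bound of the bijection $f$ in (\ref{for-map}) combined with the triangle inequality, applied within each cluster $B_j$ (resp.\ $A_j$) against the common center $c_j$. The only cosmetic difference is that the paper establishes the lower bound $r'_E\geq r_X-\mu r_{opt}$ by tracking the single vertex $p_0$ realizing $r_X$, whereas you take the maximum over all vertices; the two are equivalent.
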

In addition, since $\{f(B_1), \cdots, f(B_k)\}$ also form $k$ clusters for the instance $E$ with the fixed $k$ cluster centers of $H$, we know that
$r'_E\geq \phi_0(E, H)=r_E$. 
Similarly, we have 
$r'_X\geq r_X$.  
Combining Claim~\ref{cla-core}, we have 
\begin{eqnarray}
r_X-\mu r_{opt}\leq \underbrace{r'_X-\mu r_{opt}\leq r_E}_{\text{by Claim~\ref{cla-core}}}\leq \underbrace{r'_E\leq r_X+\mu r_{opt}}_{\text{by Claim~\ref{cla-core}}}.
\end{eqnarray}
So $|r_X-r_E|\leq \mu r_{opt}$, i.e., $\phi_0(E, H)\in \phi_0(X, H)\pm \mu r_{opt}\subseteq(1\pm \mu) \phi_0(X, H)$.  Therefore $E$ is a $\mu$-coreset of $(X,d)$.
%
%
%
%
%
%
%
%
%
\qed\end{proof}

\section{Experiments}
\label{sec-exp}
\vspace{-0.1in}
Our experimental results were obtained on a Windows workstation with 2.8GHz Intel(R) Core(TM) i5-840 and 8GB main memory; the algorithms were implemented in Matlab R2018a. We test our algorithms on both synthetic and real datasets. For Algorithm~\ref{alg-single}, we take two well known algorithms of $k$-center clustering with outliers,  $Base1$ of~\cite{charikar2001algorithms} and $Base2$ of~\cite{mccutchen2008streaming}, as the baselines. For Algorithm~\ref{alg-coreset}, we compare our coreset construction with uniform random sampling. 

To generate the synthetic datasets, we set $n=10^{5}$ and $D=10^{3}$, and vary the values of $z$ and $k$. 
First, randomly generate $k$ clusters inside a hypercube of side length $200$, where each cluster is a random sample from a Gaussian distribution with variance $10$; each cluster has a random number of points and we keep the total number of points to be $n-z$; we compute the minimum enclosing balls respectively for these $k$ clusters (by using the algorithm of \cite{badoiu2003smaller}), and randomly generate $z$ outliers outside the balls. The maximum radius of the balls is used as $r_{opt}$.

We also use three real datasets. MNIST dataset~\cite{lecun1998gradient} contains $n=60,000$ handwritten digit images from $0$ to $9$, where each image is represented by a $784$-dimensional vector. The $10$ digits form $k=10$ clusters.  
Caltech-256 dataset~\cite{fei2007learning} contains $30,607$ colored images with 256 categories, where each image is represented by a $4096$-dimensional vector. We choose $n=2,232$ images of 20 categories to form $k=20$ clusters.
CIFER-10 training dataset~\cite{krizhevsky2009learning} contains $n=50,000$ colored images in 10 classes as $k=10$ clusters, where each image is represented by a $4096$-dimensional vector. For each real dataset, we use the minimum enclosing ball algorithm of \cite{badoiu2003smaller} to compute $r_{opt}$, and randomly generate $z=5\% n$ outliers outside the corresponding balls.

\textbf{Results and analysis.} Note that we exactly exclude $z$ outliers (rather than $(1+\epsilon)z$ as stated in Theorem~\ref{the-biapprox} and \ref{the-kcenter}) in our experiments, and calculate the approximation ratio $\phi_0(X, E)/r_{opt}$ for each instance, if $E$ is the set of returned cluster centers. 

We first run our Algorithm~\ref{alg-bi} on synthetic and real datasets. For synthetic datasets, we set $k=2$-$20$, and $\beta=|E|/k=8$  via modifying the values of $\epsilon$ and $\eta$ appropriately (that means we output $8k$ cluster centers); normally, we set $\eta=0.1$ and $\epsilon\approx 0.7$. We try the instances with $z=\{2\%n, 4\%n, 6\%n, 8\%n, 10\%n\}$, and report the average results in Figure~\ref{fig-a1kr} and \ref{fig-a1kt}; the approximation ratios are within $1.3$-$1.4$ and the running times are less than $30$s. Actually, the performance is quite stable regarding different values of $z$ in our experiments, and the standard variances of approximation ratios and running times are less than $0.03$ and $0.12$, respectively. We also vary the value of $\beta$ from $4$ to $28$ with $k=10$. Figure~\ref{fig-a1er} shows that the approximation ratio slightly decreases as $\beta$ increases. The running times are all around $14$s and do not reveal a clear increasing trend as $\beta$ increases. We think the reason behind may be that we just use the simple $O(n\log n)$ sorting algorithm, rather than the linear time selection algorithm~\cite{blum1973time}, for computing $Q_j$ in practice (see Step 3(a) of Algorithm~\ref{alg-bi}); thus the running time is not linearly dependent on $|E|$. The results for real datasets are shown in Section~\ref{sec-exp-alg1}; the approximation ratios are all below $1.3$ and the running times are less than $35$s even for the largest CIFER-10 dataset.
\begin{figure}[h]
\vspace{-.6cm} 
\centering
  \subfloat[]{\label{fig-a1kr}\includegraphics[height=1in]{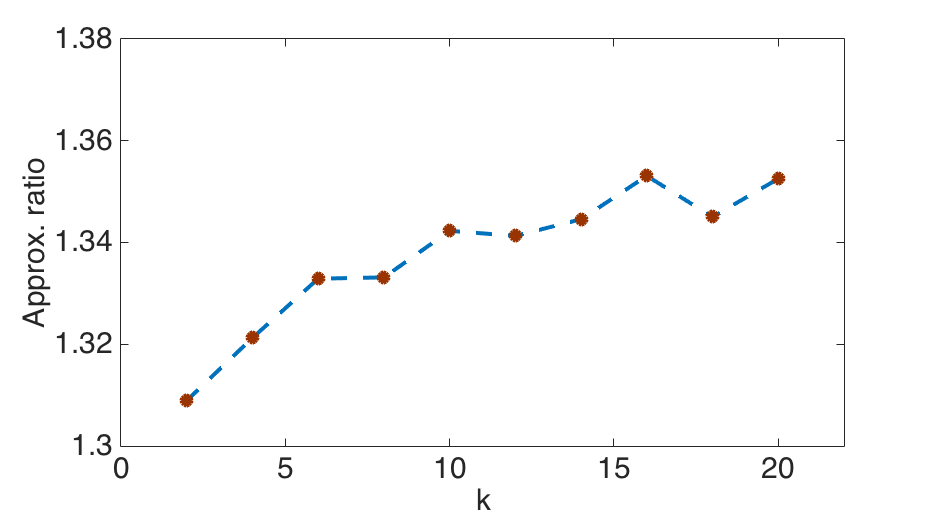}}
  \subfloat[]{\label{fig-a1kt}\includegraphics[height=1in]{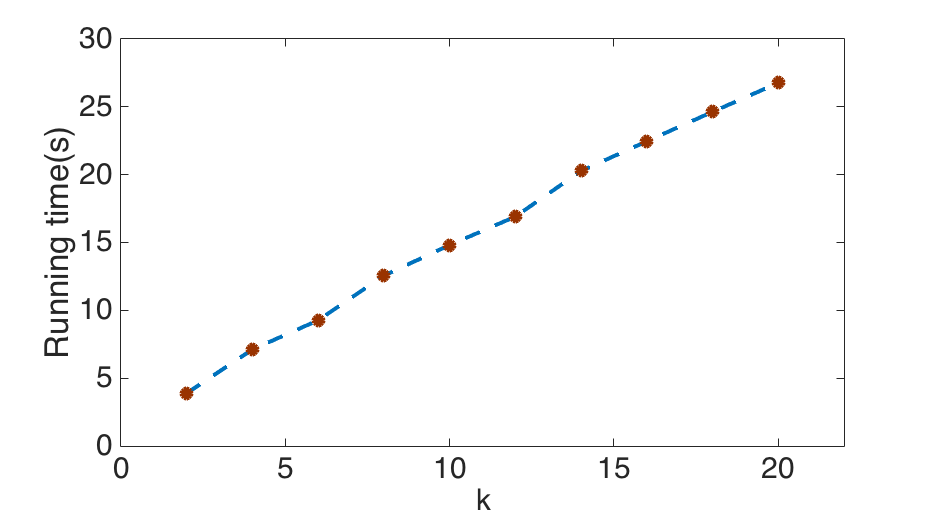}}
    \subfloat[]{\label{fig-a1er}\includegraphics[height=1in]{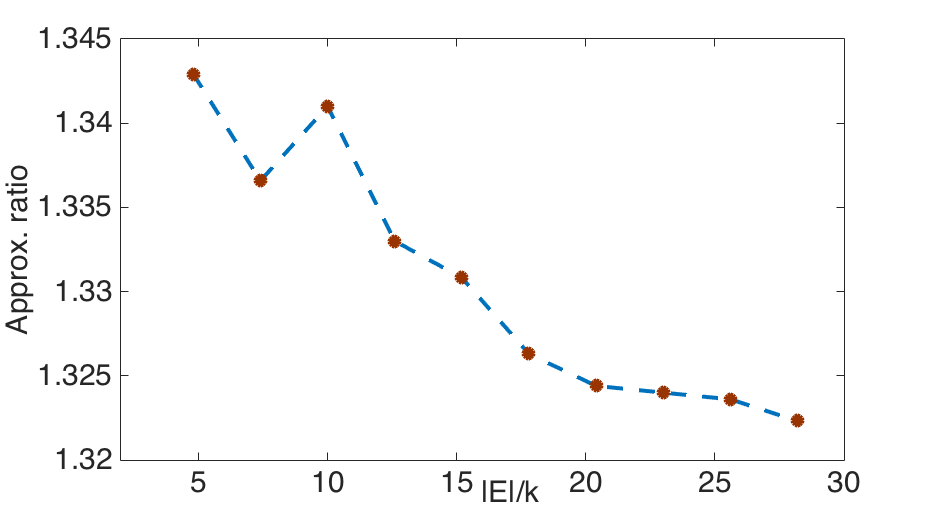}}
         \vspace{-0.1in}
  \caption{The experimental results of Algorithm~\ref{alg-bi} on synthetic datasets.} 
  \vspace{-0.2in}
\end{figure}

We also test our Algorithm~\ref{alg-single} on synthetic and real datasets. We set $\epsilon=1$ so that to avoid to repeat running  Algorithm~\ref{alg-single} too many times (see Corollary~\ref{the-kcenter2}), but we still exactly exclude $z$ outliers for calculating the approximation ratio as mentioned before. Our results are shown in Table~\ref{tab-alg2}. The synthetic and real datasets are too large to the baseline algorithms $Base1$ and $Base2$, e.g., they run too slowly or even out of memory in our workstation if $n$, $z$, and $D$ are large (they have complexities $\Omega(n^2 D)$ or $\Omega(kznD)$)\footnote{We are aware of several distributed algorithms for $k$-center clustering with outliers\cite{malkomes2015fast,guha2017distributed,DBLP:journals/corr/abs-1802-09205,li2018distributed}, but we only consider the setting with single machine in this paper.}. To make a fair comparison, we run $Base1$, $Base2$, and Algorithm~\ref{alg-single} on smaller synthetic datasets with $(n=2000, D=10)$ and $(n=2000, D=100)$; we also set $z=\{2\%n, 4\%n, 6\%n, 8\%n, 10\%n\}$ as before and report the average results. When $D=10$, $Base1$ and Algorithm~\ref{alg-single} achieve approximation ratios $<1.5$ generally (Figure~\ref{fig-a210r}); moreover, $Base2$ and Algorithm~\ref{alg-single} run much faster than $Base1$ (Figure~\ref{fig-a210t}). However, when $D=100$, $Base1$ and $Base2$ yield much worse approximation ratios than Algorithm~\ref{alg-single} (Figure~\ref{fig-a2100r} and \ref{fig-a2100t}). Our experiment reveals that Algorithm~\ref{alg-single} can achieve a more stable performance when dimensionality increases.

\begin{figure}[h]
\vspace{-.8cm} 
\centering
     \subfloat[]{\label{fig-a210r}\includegraphics[height=0.89in]{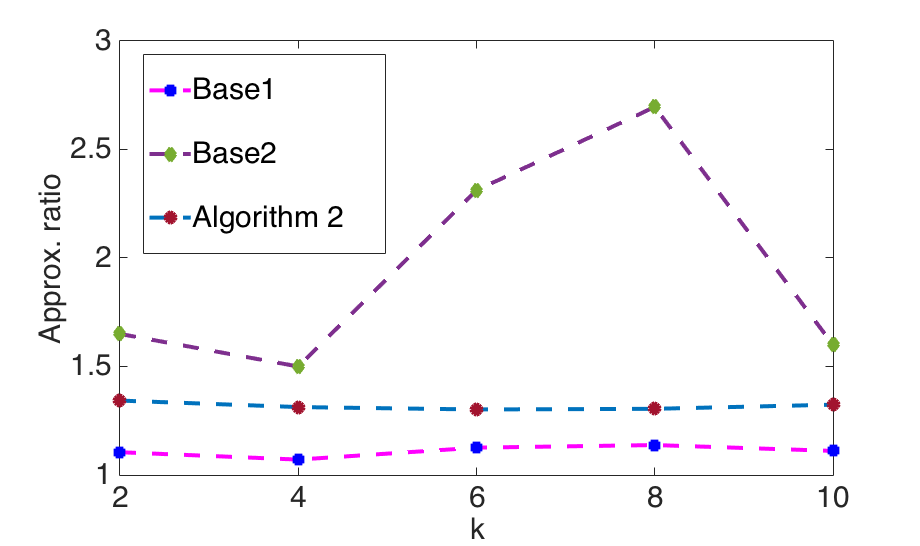}}
  \subfloat[]{\label{fig-a210t}\includegraphics[height=0.89in]{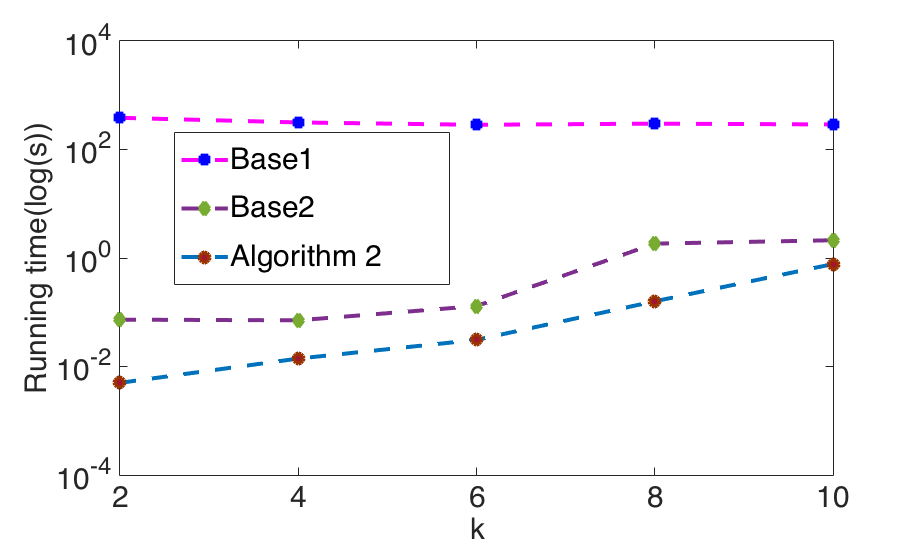}}
     \subfloat[]{\label{fig-a2100r}\includegraphics[height=0.89in]{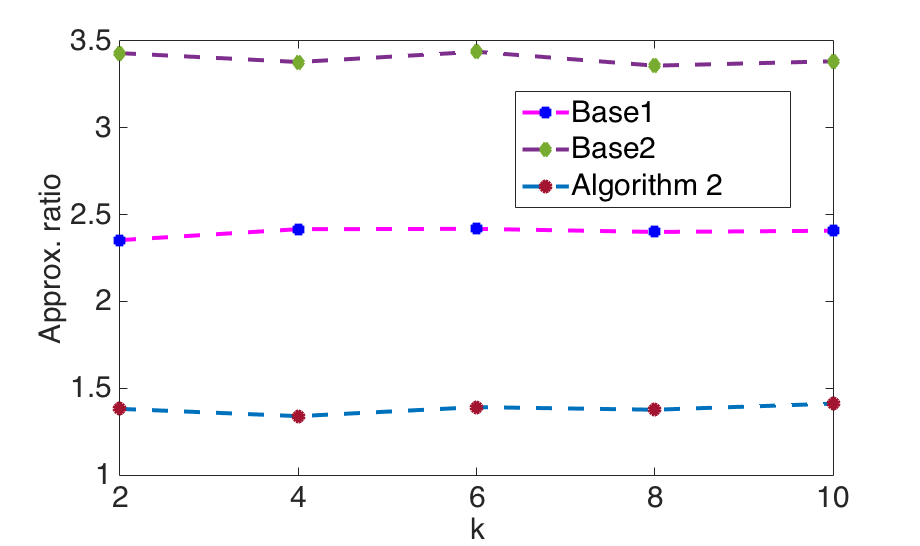}}
  \subfloat[]{\label{fig-a2100t}\includegraphics[height=0.89in]{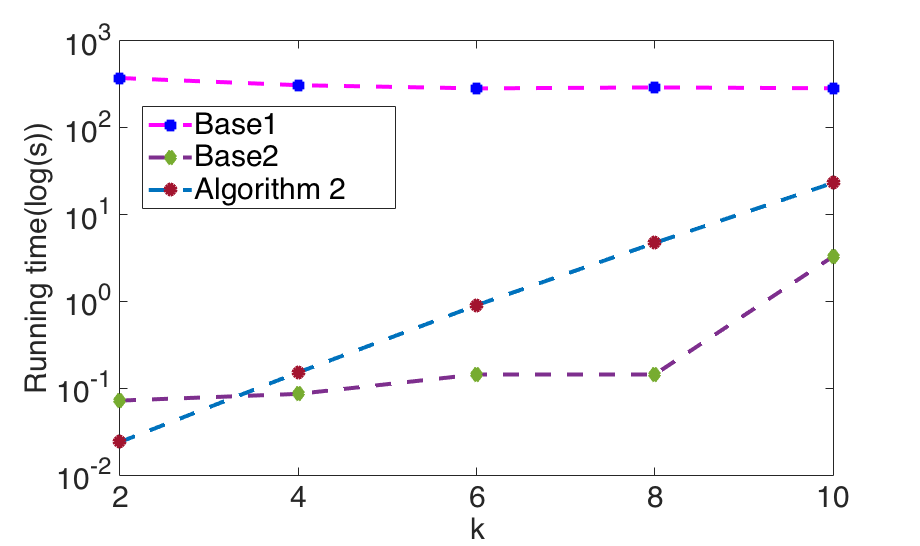}} 
      \vspace{-0.1in}
  \caption{Comparison of Base1, Base2, and Algorithm~\ref{alg-single} on smaller synthetic datasets ((a) and (b) for $D=10$; (c) and (d) for $D=100$).} 
  \vspace{-0.25in}
\end{figure}

\begin{table}[h]
\vspace{-0.35in}
	\centering
	\caption{The results of Algorithm~\ref{alg-single} on synthetic and real datasets}
	\begin{tabular}{|c|c|c|c|c|c|c|c|}
		\hline
		\multirow{2}[4]{*}{} & \multicolumn{4}{c|}{Synthetic datasets} & \multicolumn{3}{c|}{Real datasets} \\
		\cline{2-8}          & k=2   & k=4   & k=6   & k=8   & MNIST & CALTECH256 & CIFAR10  \\
		\hline
		Approx. ratio& 1.410 & 1.403 & 1.406 & 1.423 & 1.277 & 1.368 & 1.249  \\
		\hline
		Running time(s) & 8.097 & 63.636 & 374.057 & 1939.004 & 2644.709 & 3381.421 & 13295.306 \\
		\hline
	\end{tabular}%
	\label{tab-alg2}%
	\vspace{-0.2in}
\end{table}%

Finally, we compare the performances of our coresets method (Algorithm~\ref{alg-coreset}) and uniform random sampling in terms of reducing data sizes. Though real-world image datasets often are believed to have low intrinsic dimenions~\cite{belkin2003problems}, it is difficult to compute them (e.g., doubling dimension) accurately. In practice, we can directly set an appropriate value for $l$ in Step 1 of Algorithm~\ref{alg-coreset} (without knowing the value of doubling dimension $\rho$). For example, the size of coreset is $2z+O\big((\frac{2}{\mu})^\rho k\big)=2z+O(l)$ according to Theorem~\ref{the-coreset}, so we keep the sizes of our coresets to be $\{15\%n, 20\%n, 25\%n\}$ via modifying the value of $l$ in our experiments. Correspondingly, we also set the sizes of random samples to be  $\{15\%n, 20\%n, 25\%n\}$. We run Algorithm~\ref{alg-single} on the corresponding random samples and coresets, and report the results in Table~\ref{tab-core}. Running Algorithm~\ref{alg-single} on the coresets yields approximation ratios close to those obtained by directly running the algorithm on the original datasets; the results also remain stably when the level reduces from $25\%$ to $15\%$. More importantly, our coresets significantly reduce the running times (e.g., it only needs $15\%$-$35\%$ time by using $15\%$-level coreset). Comparing with the random samples, our coresets can achieve significantly lower approximation ratios especially for the $15\%$ level. Note that the coreset based approach takes more time than uniform random sampling, because we count the time spent for coreset construction.


\begin{table}[h]
	\vspace{-0.3in}
	\centering
	\caption{The results of  Algorithm~\ref{alg-single} on random samples, coresets, and original datasets}
	\begin{tabular}{|c|c|c|c|c|c|c|c|c|}
		\hline
		\multicolumn{2}{|c|}{\multirow{2}[4]{*}{}} & \multicolumn{3}{c|}{random sampling} & \multicolumn{3}{c|}{coreset} & \multirow{2}[4]{*}{100\%} \\
		\cline{3-8}    \multicolumn{2}{|c|}{} & 15\%  & 20\%  & 25\%  & 15\%  & 20\%  & 25\%  &  \\
		\hline
		MNIST & Appro. Ratio & 1.591 & 1.597 & 1.566 & 1.275 & 1.261 & 1.261 & 1.277 \\
		\cline{2-9}          & running time(s) & 624.612 & 769.517 & 958.549 & 936.393 & 1071.926 & 1262.996 & 2644.709 \\
		\hline
		CALTECH256 & Appro. Ratio & 2.903 & 2.826 & 1.935 & 1.713 & 1.722 & 1.701 & 1.368 \\
		\cline{2-9}          & running time(s) & 486.647 & 502.605 & 598.059 & 489.625 & 505.537 & 600.900 & 3381.421 \\
		\hline
		CIFAR10 & Appro. Ratio & 1.538 & 1.383 & 1.446 & 1.248 & 1.256 & 1.249 & 1.249 \\
		\cline{2-9}          & running time(s) & 2420.943 & 2170.416 & 2938.773 & 3526.752 & 3264.858 & 4033.862 & 13295.306 \\
		\hline
	\end{tabular}%
	\label{tab-core}%
		\vspace{-0.3in}
\end{table}%

\vspace{-0.05in}

\section{Future Work}
\vspace{-0.1in}
 Following our work, several interesting problems deserve to be studied in future. For example, 
can the coreset construction time of Algorithm~\ref{alg-coreset} be improved, like the fast net construction method proposed by Har-Peled and Mendel~\cite{har2006fast} in doubling metrics? It is also interesting to study other problems involving outliers by using greedy strategy.

\newpage
\bibliographystyle{abbrv}

\bibliography{greedy}

\begin{thebibliography}{10}

\bibitem{aggarwal2009adaptive}
A.~Aggarwal, A.~Deshpande, and R.~Kannan.
\newblock Adaptive sampling for k-means clustering.
\newblock In {\em Approximation, Randomization, and Combinatorial Optimization.
  Algorithms and Techniques}, pages 15--28. Springer, 2009.

\bibitem{DBLP:conf/cccg/AghamolaeiG18}
S.~Aghamolaei and M.~Ghodsi.
\newblock A composable coreset for k-center in doubling metrics.
\newblock In {\em Proceedings of the 30th Canadian Conference on Computational
  Geometry, {CCCG} 2018, August 8-10, 2018, University of Manitoba, Winnipeg,
  Manitoba, Canada}, pages 165--171, 2018.

\bibitem{alon2003testing}
N.~Alon, S.~Dar, M.~Parnas, and D.~Ron.
\newblock Testing of clustering.
\newblock {\em SIAM Journal on Discrete Mathematics}, 16(3):393--417, 2003.

\bibitem{alon2004probabilistic}
N.~Alon and J.~H. Spencer.
\newblock {\em The probabilistic method}.
\newblock John Wiley \& Sons, 2004.

\bibitem{awasthi2014center}
P.~Awasthi and M.-F. Balcan.
\newblock Center based clustering: A foundational perspective.
\newblock 2014.

\bibitem{bachem2017practical}
O.~Bachem, M.~Lucic, and A.~Krause.
\newblock Practical coreset constructions for machine learning.
\newblock {\em arXiv preprint arXiv:1703.06476}, 2017.

\bibitem{badoiu2003smaller}
M.~Badoiu and K.~L. Clarkson.
\newblock Smaller core-sets for balls.
\newblock In {\em Proceedings of the ACM-SIAM Symposium on Discrete Algorithms
  (SODA)}, pages 801--802, 2003.

\bibitem{BHI}
M.~Badoiu, S.~Har-Peled, and P.~Indyk.
\newblock Approximate clustering via core-sets.
\newblock In {\em Proceedings of the ACM Symposium on Theory of Computing
  (STOC)}, pages 250--257, 2002.

\bibitem{belkin2003problems}
M.~Belkin.
\newblock {\em Problems of learning on manifolds}.
\newblock The University of Chicago, 2003.

\bibitem{blum1973time}
M.~Blum, R.~W. Floyd, V.~Pratt, R.~L. Rivest, and R.~E. Tarjan.
\newblock Time bounds for selection.
\newblock {\em Journal of Computer and System Sciences}, 7(4):448--461, 1973.

\bibitem{blumer1989learnability}
A.~Blumer, A.~Ehrenfeucht, D.~Haussler, and M.~K. Warmuth.
\newblock Learnability and the vapnik-chervonenkis dimension.
\newblock {\em Journal of the ACM (JACM)}, 36(4):929--965, 1989.

\bibitem{DBLP:journals/corr/abs-1802-09205}
M.~Ceccarello, A.~Pietracaprina, and G.~Pucci.
\newblock Solving k-center clustering (with outliers) in mapreduce and
  streaming, almost as accurately as sequentially.
\newblock {\em CoRR}, abs/1802.09205, 2018.

\bibitem{DBLP:conf/icalp/ChakrabartyGK16}
D.~Chakrabarty, P.~Goyal, and R.~Krishnaswamy.
\newblock The non-uniform k-center problem.
\newblock In {\em 43rd International Colloquium on Automata, Languages, and
  Programming, {ICALP} 2016, July 11-15, 2016, Rome, Italy}, pages 67:1--67:15,
  2016.

\bibitem{chandola2009anomaly}
V.~Chandola, A.~Banerjee, and V.~Kumar.
\newblock Anomaly detection: A survey.
\newblock {\em ACM Computing Surveys (CSUR)}, 41(3):15, 2009.

\bibitem{charikar2001algorithms}
M.~Charikar, S.~Khuller, D.~M. Mount, and G.~Narasimhan.
\newblock Algorithms for facility location problems with outliers.
\newblock In {\em Proceedings of the twelfth annual ACM-SIAM symposium on
  Discrete algorithms}, pages 642--651. Society for Industrial and Applied
  Mathematics, 2001.

\bibitem{charikar2003better}
M.~Charikar, L.~O'Callaghan, and R.~Panigrahy.
\newblock Better streaming algorithms for clustering problems.
\newblock In {\em Proceedings of the thirty-fifth annual ACM symposium on
  Theory of computing}, pages 30--39. ACM, 2003.

\bibitem{DBLP:journals/corr/abs-1205-4891}
A.~Daniely, N.~Linial, and M.~E. Saks.
\newblock Clustering is difficult only when it does not matter.
\newblock {\em CoRR}, abs/1205.4891, 2012.

\bibitem{gonzalez1985clustering}
T.~F. Gonzalez.
\newblock Clustering to minimize the maximum intercluster distance.
\newblock {\em Theoretical Computer Science}, 38:293--306, 1985.

\bibitem{guha2017distributed}
S.~Guha, Y.~Li, and Q.~Zhang.
\newblock Distributed partial clustering.
\newblock In {\em Proceedings of the 29th ACM Symposium on Parallelism in
  Algorithms and Architectures}, pages 143--152. ACM, 2017.

\bibitem{har2006fast}
S.~Har-Peled and M.~Mendel.
\newblock Fast construction of nets in low-dimensional metrics and their
  applications.
\newblock {\em SIAM Journal on Computing}, 35(5):1148--1184, 2006.

\bibitem{har2011relative}
S.~Har-Peled and M.~Sharir.
\newblock Relative (p, $\varepsilon$)-approximations in geometry.
\newblock {\em Discrete \& Computational Geometry}, 45(3):462--496, 2011.

\bibitem{hochbaum1985best}
D.~S. Hochbaum and D.~B. Shmoys.
\newblock A best possible heuristic for the k-center problem.
\newblock {\em Mathematics of operations research}, 10(2):180--184, 1985.

\bibitem{huang2018epsilon}
L.~Huang, S.~Jiang, J.~Li, and X.~Wu.
\newblock Epsilon-coresets for clustering (with outliers) in doubling metrics.
\newblock In {\em 2018 IEEE 59th Annual Symposium on Foundations of Computer
  Science (FOCS)}, pages 814--825. IEEE, 2018.

\bibitem{DBLP:conf/pods/IndykMMM14}
P.~Indyk, S.~Mahabadi, M.~Mahdian, and V.~S. Mirrokni.
\newblock Composable core-sets for diversity and coverage maximization.
\newblock In {\em Proceedings of the 33rd {ACM} {SIGMOD-SIGACT-SIGART}
  Symposium on Principles of Database Systems, PODS'14, Snowbird, UT, USA, June
  22-27, 2014}, pages 100--108, 2014.

\bibitem{jain2010data}
A.~K. Jain.
\newblock Data clustering: 50 years beyond k-means.
\newblock {\em Pattern recognition letters}, 31(8):651--666, 2010.

\bibitem{DBLP:journals/pami/KanungoMNPSW02}
T.~Kanungo, D.~M. Mount, N.~S. Netanyahu, C.~D. Piatko, R.~Silverman, and A.~Y.
  Wu.
\newblock An efficient k-means clustering algorithm: Analysis and
  implementation.
\newblock {\em {IEEE} Trans. Pattern Anal. Mach. Intell.}, 24(7):881--892,
  2002.

\bibitem{krizhevsky2009learning}
A.~Krizhevsky.
\newblock Learning multiple layers of features from tiny images.
\newblock Technical report, Citeseer, 2009.

\bibitem{lecun1998gradient}
Y.~LeCun, L.~Bottou, Y.~Bengio, and P.~Haffner.
\newblock Gradient-based learning applied to document recognition.
\newblock {\em Proceedings of the IEEE}, 86(11):2278--2324, 1998.

\bibitem{fei2007learning}
F.-F. Li, R.~Fergus, and P.~Perona.
\newblock Learning generative visual models from few training examples: An
  incremental bayesian approach tested on 101 object categories.
\newblock {\em Computer vision and Image understanding}, 106(1):59--70, 2007.

\bibitem{li2018distributed}
S.~Li and X.~Guo.
\newblock Distributed $ k $-clustering for data with heavy noise.
\newblock In {\em Advances in Neural Information Processing Systems}, pages
  7849--7857, 2018.

\bibitem{li2001improved}
Y.~Li, P.~M. Long, and A.~Srinivasan.
\newblock Improved bounds on the sample complexity of learning.
\newblock {\em Journal of Computer and System Sciences}, 62(3):516--527, 2001.

\bibitem{malkomes2015fast}
G.~Malkomes, M.~J. Kusner, W.~Chen, K.~Q. Weinberger, and B.~Moseley.
\newblock Fast distributed k-center clustering with outliers on massive data.
\newblock In {\em Advances in Neural Information Processing Systems}, pages
  1063--1071, 2015.

\bibitem{mccutchen2008streaming}
R.~M. McCutchen and S.~Khuller.
\newblock Streaming algorithms for k-center clustering with outliers and with
  anonymity.
\newblock In {\em Approximation, Randomization and Combinatorial Optimization.
  Algorithms and Techniques}, pages 165--178. Springer, 2008.

\bibitem{DBLP:journals/corr/Phillips16}
J.~M. Phillips.
\newblock Coresets and sketches.
\newblock {\em Computing Research Repository}, 2016.

\bibitem{talwar2004bypassing}
K.~Talwar.
\newblock Bypassing the embedding: algorithms for low dimensional metrics.
\newblock In {\em Proceedings of the thirty-sixth annual ACM symposium on
  Theory of computing}, pages 281--290, 2004.

\bibitem{tan2006introduction}
P.-N. Tan, M.~Steinbach, and V.~Kumar.
\newblock {\em Introduction to Data Mining}.
\newblock 2006.

\end{thebibliography}

\section{Proof of Claim \ref{pro-core}}
\label{sec-proof-c1}

Suppose $H$ is an $\alpha$-approximation of the instance (coreset) $S$. Let $H_{opt}$ be the set of $k$ cluster centers yielding the optimal solution of $X$. Then we have 
\begin{eqnarray}
\phi_{0}(S, H)&\leq&\alpha \phi_{0}(S, H_{opt});  \label{for-pro13}\\
\phi_0 (S,H)&\in& (1\pm\mu)\phi_{0}(X, H); \label{for-pro11}\\
\phi_0 (S,H_{opt})&\in& (1\pm\mu)\phi_{0}(X, H_{opt}); \label{for-pro12}
\end{eqnarray}
Combining the above inequalities, we directly have
\begin{eqnarray}
\phi_{0}(X, H)&\leq& \frac{1}{1-\mu}\phi_0 (S,H)\nonumber\\
&\leq&  \frac{\alpha}{1-\mu}\phi_0 (S,H_{opt})\nonumber\\
&\leq& \frac{\alpha (1+\mu)}{1-\mu}\phi_0 (X,H_{opt}).
\end{eqnarray}
So $H$ is an $\frac{\alpha (1+\mu)}{1-\mu}$-approximation of $X$. 

\section{Proof of Claim \ref{pro-sample}}
\label{sec-pro-sample}

Let the number of sampled elements be $h$. Since each sampled element falls in $V$ with probability $\tau$, by taking the union bound, we know that the sample contains at least one element from $V$ with probability $1-(1-\tau)^h$. Therefore, if we want $1-(1-\tau)^h\geq 1-\eta$, $h$ should be at least $\frac{\log 1/\eta}{\log 1/(1-\tau)}\leq\frac{1}{\tau}\log\frac{1}{\eta}$.

\section{Proof of Theorem \ref{the-biapprox}}
\label{sec-proof-the}

We assume that (1) in Lemma \ref{lem-select2} never happens, and prove that $\lambda_j(E)=k$ with constant probability when $j=\Theta(k)$. The idea actually has been used by~\cite{aggarwal2009adaptive} for obtaining a bi-criteria approximation for $k$-means clustering. Define a random variable $x_j$: $x_j=1$ if $\lambda_j(E)=\lambda_{j-1}(E)$, or $0$ if $\lambda_j(E)\geq\lambda_{j-1}(E)+1$, for $j=1, 2, \cdots$. So $\mathbb{E}[x_j]\leq\eta$ and
\begin{eqnarray}
\sum_{1\leq s\leq j}(1-x_s)\leq\lambda_j(E). \label{forr-azuma2}
\end{eqnarray}
Also, let $J_j=\sum_{1\leq s\leq j}(x_s-\eta)$ and $J_0=0$. Then, $\{J_0, J_1, J_2, \cdots\}$ is a super-martingale by the following definition.


\begin{definition}
\label{def-martingale}
A sequence of real valued random variables $J_0, J_1, \cdots, J_t$ is called a super-martingale if for every $j>1$, $E[J_j\mid J_0, \cdots, J_{j-1}]\leq J_{j-1}$.
\end{definition}

In addition, we know that $J_{j+1}-J_j\leq 1$ for each $j\geq 0$. 
Through {\em Azuma-Hoeffding inequality}~\cite{alon2004probabilistic}, we have 
\begin{eqnarray}
Pr(J_t\geq J_0+\delta)\leq e^{-\frac{\delta^2}{2t}}\label{forr-azuma1}
\end{eqnarray}
for any $t\in\mathbb{Z}^+$ and $\delta>0$. Let $t=\frac{k+\sqrt{k}}{1-\eta}$ and $\delta=\sqrt{k}$, (\ref{forr-azuma1}) implies that
\begin{eqnarray}
Pr(\sum_{1\leq s\leq t}(1-x_s)\geq t(1-\eta)-\delta)&\geq& 1-e^{-\frac{\delta^2}{2t}}\nonumber\\
\Longrightarrow Pr(\sum_{1\leq s\leq t}(1-x_s)\geq k)&\geq& 1-e^{-\frac{k}{2(k+\sqrt{k})/(1-\eta)}}\nonumber\\
\Longrightarrow Pr(\sum_{1\leq s\leq t}(1-x_s)\geq k)&\geq& 1-e^{-\frac{1-\eta}{4}}. \label{forr-azuma}
\end{eqnarray}
Combining (\ref{forr-azuma2}) and (\ref{forr-azuma}), we know that $\lambda_t(E)=k$ with probability at least $1-e^{-\frac{1-\eta}{4}}$. Moreover, $\lambda_t(E)=k$ directly implies that $E$ is a $2$-approximate solution by Claim~\ref{cla-e2}. Together with Lemma~\ref{lem-select1}, we have Theorem \ref{the-biapprox}.

\section{Proof of Claim~\ref{cla-core}}
\label{sec-proof-cla-core}
We just need to prove the first inequality since the other one can be obtained by the same manner. 
Because each $B_j\subseteq Ball(c_j, r_X)$ and each vertex $p$ is moved by a distance at most $\mu r_{opt}$ based on (\ref{for-map}), we know that $f(B_j)\subseteq Ball(c_j, r_X+\mu r_{opt})$, i.e., $r'_E\leq r_X+\mu r_{opt}$. 

Let $p_0$ be the vertex realizing $r_X=\phi_0(X, H)$, that is, there exists some $1\leq j_0\leq k$ such that $||c_{j_0}-p_0||=r_X$.  The triangle inequality and (\ref{for-map}) together imply $||c_{j_0}-f(p_0)||\geq r_X-\mu r_{opt}$. Hence $r'_E\geq r_X-\mu r_{opt}$.

Overall, we have $|r'_E-r_X|\leq \mu r_{opt}$.

\section{Algorithm~\ref{alg-bi} on Real Datasets}
\label{sec-exp-alg1}
\begin{table}[htbp]
	\centering
	\caption{Algorithm~\ref{alg-bi} on real dataset}
	\begin{tabular}{|c|c|c|c|c|c|c|c|c|c|c|c|}
		\hline
		\multicolumn{2}{|c|}{ $|E|/k$} & 3   & 6   & 9   & 12  & 15    & 18  & 21  & 24  & 27  & 30   \\
		\hline
		\multirow{2}[4]{*}{MNIST} & approx. ratio & 1.267 & 1.241 & 1.211 & 1.208 & 1.178  & 1.154 & 1.138 & 1.132 & 1.117 & 1.117 \\
		\cline{2-12}          & running time(s) & 7.110 & 7.154 & 7.199 & 7.086 & 7.055  & 7.166 & 7.263 & 7.148 & 7.192 & 7.180 \\
		\hline
		\multirow{2}[4]{*}{CALTECH256} & approx. ratio & 1.056 & 1.015 & 0.999 & 0.975 & 0.947  & 0.903 & 0.878 & 0.873 & 0.861 & 0.852  \\
		\cline{2-12}          & running time(s) & 2.747 & 2.743 & 2.878 & 2.802 & 2.758  & 3.726 & 2.727 & 2.674 & 2.704 & 2.653 \\
		\hline
		\multirow{2}[4]{*}{CIFAR10} & approx. ratio & 1.268 & 1.236 & 1.165 & 1.157 & 1.147  & 1.096 & 1.096 & 1.066 & 1.062 & 1.070 \\
		\cline{2-12}          & running time(s) & 32.946 & 32.910 & 32.049 & 32.145 & 32.178 & 32.514 & 32.886 & 32.590 & 34.364 & 32.625\\
		\hline
	\end{tabular}%
	\label{}%
\end{table}%

Note that it is possible to obtain approximation ratio lower than $1$, since we output more than $k$ cluster centers.

\end{document}